\providecommand{\tabularnewline}{\\}
\theoremstyle{plain}
\newtheorem{thm}{\protect\theoremname}
\theoremstyle{definition}
\newtheorem{defn}[thm]{\protect\definitionname}
\theoremstyle{plain}
\newtheorem{prop}[thm]{\protect\propositionname}
\theoremstyle{remark}
\newtheorem{rem}[thm]{\protect\remarkname}
\def\sqw{\hbox{\rlap{\leavevmode\raise.3ex\hbox{$\sqcap$}}$%
\sqcup$}}
\def\cqfd{\ifmmode\sqw\else{\ifhmode\unskip\fi\nobreak\hfil
\penalty50\hskip1em\null\nobreak\hfil\sqw
\parfillskip=0pt\finalhyphendemerits=0\endgraf}\fi}
\renewcommand{\eqref}[1]{(\ref{#1})}
\newcommand\MSE{\mathrm{MSE}}
\newcommand{\dd}{\mathrm d}
\newcommand{\cC}{\mathcal{C}}
\newcommand{\cD}{\mathcal{D}}
\newcommand{\cG}{\mathcal{G}}
\newcommand{\cN}{\mathcal{N}}
\newcommand{\cO}{\mathcal{O}}
\newcommand{\cR}{\mathcal{R}}
\newcommand{\cT}{\mathcal{T}}
\newcommand{\bE}{{\mathbb E}}
\newcommand{\bN}{{\mathbb N}}
\newcommand{\esp}{\mathbb E}
\newcommand{\be}{\begin{equation}}
\newcommand{\ee}{\end{equation}}
\newcommand{\maj}{>}
\newcommand{\mino}{<}
\newcommand{\Rn}{\VIX_{T}^{2,\cR_{n}}}
\newcommand{\Tn}{\VIX_{T}^{2,\cT_{n}}}
\newcommand{\Dn}{\VIX_{T}^{2,\cD_{n}}}
\newcommand{\Cost}{\mathrm{Cost}}
\newcommand{\CostML}{\mathrm{Cost}_{\mathrm{ML}}}
\newcommand{\CostMC}{\mathrm{Cost}_{\mathrm{MC}}}
\newcommand{\pricemkt}{\mathrm{price}^{\mathrm{mkt}}}
\newcommand{\PML}{\widehat{P}_{\bm{M},\bm{n}}}
\newcommand{\Lip}{\mathrm{L}_{\varphi}}
\newcommand{\VIX}{\mathrm{VIX}}
\newcommand{\ve}{\varepsilon}
\newcommand{\BS}{\mathrm{BS}}
\newcommand{\SP}{\rm{S\&P}500}
\newcommand{\egl}{\overset{d}{=}}
\DeclareMathAlphabet\mathbfcal{OMS}{cmsy}{b}{n}
\newcommand{\Var}{\mathrm {Var}}
\newcommand{\Cov}{\mathbb{C}\mathrm{ov}}
\providecommand{\definitionname}{Definition}
\providecommand{\propositionname}{Proposition}
\providecommand{\remarkname}{Remark}
\providecommand{\theoremname}{Theorem}
\begin{document}

\title{Multilevel Monte Carlo simulation for $\VIX$ options in the rough Bergomi model}
\author[1,2]{F.\ Bourgey\thanks{(Corresponding author) florian.bourgey@polytechnique.edu, bourgeyflorian@gmail.com}}
\author[1]{S.\ De Marco\thanks{stefano.de-marco@polytechnique.edu}}
\affil[1]{Centre de Math\'ematiques Appliqu\'ees (CMAP), CNRS, Ecole Polytechnique, Institut Polytechnique de Paris. Route de Saclay, 91128 Palaiseau Cedex, France. }
\affil[2]{Bloomberg L.P., Quantitative Research, 3 Queen Victoria St, London
	EC4N 4TQ, UK.}

\date{}
\maketitle

{
{\sc Key words:} $\VIX$ options, multilevel Monte Carlo, forward variance curve, rough volatility.\\

{\sc MSC2010:} 
	60G15,  
	60G22, 
	65C05, 	
	91G20, 	
	91G60. 
}

\begin{abstract}
	\noindent We consider the pricing of $\VIX$ options in the rough Bergomi model. 
	In this setting, the $\VIX$ random variable is defined by the one-dimensional integral of the exponential of a Gaussian process with correlated increments, hence
	approximate samples of the $\VIX$ can be constructed via discretization of the integral and simulation of a correlated Gaussian vector.
	A Monte-Carlo estimator of $\VIX$ options based on a rectangle discretization scheme and exact Gaussian sampling via the Cholesky method has a computational complexity of order $\cO(\ve^{-4})$ when the mean-squared error is set to $\ve^2$.
	We demonstrate that this cost can be reduced
	to $\cO(\ve^{-2} \log^2(\ve))$ combining the scheme above with the multilevel method,
	and further reduced to the asymptotically optimal cost $\cO(\ve^{-2})$ when using a trapezoidal discretization.
	We provide numerical experiments highlighting the efficiency of the multilevel approach in the pricing of $\VIX$ options in such a rough forward variance setting.
\end{abstract}

\medskip

\section{Introduction }

The joint modeling of the dynamics of option prices  and of the underlying financial asset $S$ has been one of the central problems in mathematical finance over the last decades.
An explicit modeling of the implied volatility surface (which is an infinite-dimensional object parameterized by two variables, the option maturity $T$ and the strike price $K$) being a difficult problem, a lot of attention has turned towards the modeling of forward variances, \emph{aka} the implied variances of forward log-contracts.
The forward variance $V_t^{T, T + \Delta}$ observed at time $t$ for the future period $[T, T + \Delta]$ is defined by the market price prevailing at time $t$ of a forward log-contract,
$V_t^{T, T + \Delta} := \pricemkt_t \Bigl(-\frac 2 {\Delta} \log \Bigl( \frac{S_{T + \Delta}}{S_{T}} \Bigr) \Bigr)$. Forward variances are still infinite-dimensional objects, but one is now focusing on a \emph{one-dimensional} set of derivatives (as opposed to the two-dimensional implied volatility surface), parameterized by the future maturity $T$, and the modeling problem is considerably simplified.
Note that, taking into account the prefactor $- \frac 2{\Delta}$ in the log-contract payoff, the Black--Scholes implied variance of the log-contract coincides with the derivative's price: the price of such a contract in a Black--Scholes model with volatility parameter $\sigma$ is indeed $\mathrm{price}_t^{\mathrm{BS}} \Bigl(-\frac 2 {\Delta} \log \Bigl( \frac{S_{T+\Delta}}{S_{T}} \Bigr) \Bigr) = \sigma^2$.

The implied variance of the log-contract is used by the Chicago Board Options Exchange (CBOE) to define the $\VIX$ index \citep{exchange2009cboe}: more precisely, the VIX value at time $T$ is equal to the implied volatility of the log-contract delivering $-\frac 2 {\Delta}\ln\left(\frac{S_{T+\Delta}}{S_{T}}\right)$ at time $T+\Delta$, where $\Delta=30$ days and $S$ is the $\SP$ index:
\begin{equation}\label{eq:def:vix:impliedvoloflogcontract}
	\VIX_{T} :=\sqrt{\pricemkt_{T}\biggl(-\frac{2}{\Delta}\ln\biggl(\frac{S_{T+\Delta}}{S_{T}}\biggr)\biggr)}.
\end{equation}
For simplicity, we consider zero interest, dividend, and repo rates here and in what follows. If this is not the case, $S_T$ should be replaced inside \eqref{eq:def:vix:impliedvoloflogcontract} by the $\SP$ forward price $F_T^{T+\Delta}$ observed at time $T$ for the maturity $T+\Delta$.
Since the log-contract itself is not  a traded object on the CBOE,  its price is quoted  by static replication of the payoff $\ln(S)$ with call and put options; the precise way the static replication formula is discretized over the observed option strikes and maturities is described in the VIX White paper \citep{exchange2009cboe}.

A standard and widespread approach, pioneered by \citep{dupire1992arbitrage} and intensively studied by \citep{bergomi2005smile, Bergomi2016}, is to set up a model for instantaneous forward variances, defined by $\xi_t^u = \frac{\dd}{\dd u} \bigl( (u - t)  V_t^{t, u} \bigr)$ for $u \ge t$  (very much in the spirit of the definition of instantaneous forward rates from bond prices).
Once a model for the instantaneous -- and therefore unobserved -- object $\xi_t^u$ has been settled, the discrete forward variances $V_t^{T, T + \Delta}$ are retrieved, for any tenor structure $T$, via integration over the parameter $u$, $V_t^{T, T + \Delta} = \frac 1 \Delta \int_T^{T+\Delta} \xi_t^u \dd u$.

In a class of models widely used in practice since the works of \citep{bergomi2005smile}, forward variances are modeled by the exponential
of a Gaussian process
\begin{equation} \label{def:sde:X_T}
X_{t}^{u}
:= \log \left( \xi_t^u \right)
= X_{0}^{u} - \frac{1}{2}\int_{0}^{t} K\left(u,s\right)^{2}\dd s+\int_{0}^{t} K\left(u,s\right)\dd W_{s},\quad u\geq t,
\end{equation}
where $W$ is a standard Brownian motion and $K$ a deterministic kernel such that $K(u, \cdot) \in L^2((0,u))$ for every $u \maj 0$.
The class of models \eqref{def:sde:X_T} encompasses Bergomi's model \citep{bergomi2005smile}, obtained when choosing the exponential kernel $K(u,s) =  \omega \, e^{-\kappa(u - s)}$ with
$\omega, \kappa \maj 0$, and the so-called rough Bergomi model of \citep{bayer2016pricing}, corresponding to the power-law (or fractional) kernel $K\left(u,s\right)=  \frac \eta{\left(u-s\right)^{1/2 - H}}$, where $H\in\left(0,1/2\right)$ and $\eta>0$. 
Several authors have reported that a small value of $H$, namely $H \approx 0.1$, is appropriate to reconstruct the observed term structure both of the at-the-money implied volatility skew of the $\SP$ \citep{alos2007short, bayer2016pricing, fukasawa:2017} and of VIX futures and at-the-money VIX implied volatilities \citep{Jacquier2018VIXFutures, Alos2018VIXskew}.

The modeling framework above has an appealing simplicity: being explicit functions of a Gaussian process, instantaneous forward variances $\xi_t^u = e^{X_t^u}$ can be simulated exactly.
Nevertheless, this modeling approach rises some non-trivial computational issues: recall that the instantaneous forward variance $\xi_t^u$ is an unobservable object, while traded derivatives are written on integrated forward variances. More specifically, the squared $\VIX$ index at time $T$ is
\begin{equation}\label{eq:def:VIX2}
(\VIX_{T})^2 = \frac{1}{\Delta}\int_{T}^{T+\Delta}e^{X_{T}^{u}}\dd u \,,
\end{equation}
so that the price at time zero of an option on the $\VIX^2$ with payoff $\varphi$ is given by
\begin{equation}\label{def:option:VIX2}
	\bE\left[\varphi\left(\VIX_{T}^{2}\right)\right]=\bE\left[\varphi\left(\frac{1}{\Delta}\int_{T}^{T+\Delta}e^{X_{T}^{u}}\dd u\right)\right].
\end{equation}
When $\varphi\left(x\right)=\left(\sqrt{x}-\kappa\right)_{+}$ (resp.\ $\varphi\left(x\right)=\left(\kappa-\sqrt{x}\right)_{+}$), \eqref{def:option:VIX2} is the price of a VIX call option (resp.\ put option) with strike $\kappa$ and maturity $T$.
In any forward variance model such as \eqref{def:sde:X_T}, then, the $\VIX$ is given by a continuous sum of log-normal random variables; the resulting distribution is unknown (apart from trivial cases where $K \equiv 0$) and the random variable $\VIX_T$ cannot be simulated exactly. 

The problem \eqref{def:option:VIX2} is close in spirit to the valuation of Asian options $\varphi(A_T) = \varphi\left(\frac{1}{T}\int_{0}^{T} Y_{t} \, \dd t\right)$, where $Y$ represents an asset price.
There is a vast literature on the numerical methods that can be used to evaluate $\esp[\varphi(A_T)]$ when $Y$ is modeled by a diffusion process.
Yet, there is a fundamental and structural difference with the pricing of a VIX option such as \eqref{def:option:VIX2}: while the Asian option payoff is based on the integral of a continuous--time Markov process with respect to its time parameter, there is, in general, no finite--dimensional Markovian structure underlying the problem \eqref{def:option:VIX2}, and this is precisely the case when $K$ is a fractional kernel,
as in the rough Bergomi model.
In particular then, numerical approaches based on PDEs, explored in the context of Asian options by \citep{rogers1995value,zhang2001semi}, and \citep{dubois2004efficient}, are arguably out-of-the-way (in a setting of finite dimensional PDEs at least).

The fact that the random vector $\bigl(e^{X_T^{u_i}} \bigr)_{i = 0, \dots, n}$ can be simulated exactly, for any choice of the sequence ${(u_i)}_{i = 0, \dots, n} \subset [T, T+\Delta]$, opens the way to Monte Carlo (MC) methods. It is natural to approximate the integral $\int_{T}^{T+\Delta}e^{X_{T}^{u}}\dd u$ using some discretization scheme $\sum_{i=0}^n \omega_i e^{X_T^{u_i}}$: this is the path we are going to follow. We are going to analyze the discretization error of a rectangle and a trapezoidal discretization scheme, which is our starting point to design optimal multilevel Monte Carlo (MLMC) estimators \citep{giles_2008} for $\VIX$ options.
In this respect, the present work can be seen as an extension to the setting of rough fractional processes of the work of \citep{lapeyre2001competitive}, who studied the discretization error of the rectangle and trapezoidal
schemes for the Asian payoff $A_T = \frac{1}{T}\int_{0}^{T} Y_{t}\dd t$ when $Y$ is a geometric Brownian motion,
and of the work of \citep{alaya2014multilevel}, who studied, for the same problem of Asian option pricing under geometric Brownian motion, the application  of the multilevel method.

The weak discretization error for the $\VIX$ in the rough Bergomi model has recently been studied by \citep{horvath2018volatility},
who showed that a $\VIX$ option \eqref{def:option:VIX2} with Lipschitz payoff $\varphi$ has a weak error of order $\cO\bigl(\frac 1 n \bigl)$ for a rectangle scheme on a uniform grid with $n$ points, resp.\ of order $\cO\bigl(\frac 1{n^2} \bigl)$ for a trapezoidal scheme on an adaptive grid.
We complement these results by stating upper bounds for the $L^p$ strong error (Proposition \ref{prop:strong:Lp:error}) and providing an explicit asymptotic expansion of the $L^2$ strong error of the rectangle scheme (Theorem \ref{thm:strong:L2:error}).
Let us note that, due to the cost $\cO(n^2)$ required for the exact simulation of a Gaussian vector with $n$ correlated components, the resulting strong and weak error asymptotics lead to a computational cost of order $\cO(\ve^{-4})$ (resp.\ $\cO(\ve^{-3})$) for a Monte Carlo estimator of $\VIX$ options based on a rectangle (resp.\ trapezoidal) scheme, when the overall $\MSE$ is required to be smaller than a given tolerance $\ve^2$.
We demonstrate, then, that the combination of a multilevel Monte Carlo estimator with the rectangle scheme allows to significantly reduce the 
computational cost to $\cO(\ve^{-2} \log^2(\ve))$, and further that the optimal asymptotical complexity $\cO(\ve^{-2})$
can be obtained 
combining an MLMC estimator with the trapezoidal scheme, leading to significant computational time reduction (section \ref{sec:mlmc:vix:mlmc:estimator}).
We provide numerical tests supporting our theoretical findings.

It is worth mentioning that other approaches to the valuation of $\VIX$ derivatives under the class of log-normal models \eqref{def:sde:X_T}  have been recently explored:	\citep{Jacquier2018VIXFutures} investigate upper and lower bounds along with log-normal approximations for VIX futures, and \citep{Bonesini2021functional} develop a functional quantization for the log-forward variance process $u \mapsto X_T^u$.
Another stream of recent literature aims at computing asymptotic formulas in different asymptotic regimes: when the $\VIX$ option maturity $T$ is small and/or small volatility-of-volatility as in \citep{guyon2020vix} and \citep{Alos2018VIXskew}, or when the time-window $\Delta$ is small as in \citep[Chapter 6]{bourgey2020stochastic}, which is the case for the $\VIX$ index, for which $\Delta = 30 \, \mathrm{days} \approx \frac 1 {12}$.
Of course, the multilevel Monte Carlo estimator we design in section \ref{sec:mlmc:vix:mlmc:estimator} can be applied for any value of $T$ and $\Delta$, so that this valuation method is not limited to short-maturity options or to the VIX case, but allows to deal with options on general forward variances $V_T^{T_1,T_2}$ for any tenor structure $T \le T_1 \mino T_2$.

\section{Time discretization}

\label{sec:mlmc:vix:rectangle:scheme}

We focus on the rough Bergomi model for forward variances $\xi_t^u$, corresponding to the model \eqref{def:sde:X_T} equipped with the  power-law kernel $K\left(u,s\right)=  \frac \eta{\left(u-s\right)^{1/2 - H}}$.
Even though the rough Bergomi model is typically considered and analysed only 
for values of $H$ ranging in $\bigl(0,\frac 12 \bigr)$, for this interval contains the typical values of $H$ encountered in the calibration to the $\VIX$ and $\SP$ markets, there is of course no issue in defining the process \eqref{def:sde:X_T} for any value of $H \in (0,1)$.

An approximate simulation of the VIX random variable 
can be obtained via a discretization of the time integral in \eqref{eq:def:VIX2}. 
We fix a uniform grid with $n\in\bN^{*}$ time steps 
\begin{equation}\label{eq:def:uniform:grid}
	\mathcal{G}_{n}:=\left\{ u_{i}:=T + i\, h,\quad i=0,\dots,n\right\} ,\qquad h:=\frac{\Delta}{n},
\end{equation}
and consider the following simulation schemes. 
\begin{defn}[Discretization schemes for the $\VIX_{T}^{2}$] \label{def:discr}
The right-point rectangle scheme on the grid $\mathcal{G}_{n}$ is given by
\begin{equation}
\Rn:=\frac{1}{n}\sum_{i=1}^{n}e^{X_{T}^{u_{i}}},\label{eq:def:right:point:rectangle:scheme}
\end{equation}
and the trapezoidal scheme by
\begin{equation}
\Tn:=\frac{1}{2n}\sum_{i=1}^{n}\Bigl(e^{X_{T}^{u_{i}}}+e^{X_{T}^{u_{i-1}}}\Bigr).\label{eq:def:trapezoidal:scheme}	
\end{equation}
\end{defn}

The family of random variables $\left(X_{T}^{u_{i}}\right)_{i=0,\dots n}$ forms an $(n+1)$-dimensional Gaussian vector.
The mean $\mu=\left(\mu\left(u_{i}\right)\right)_{i=0,\dots,n}$ and covariance matrix $C=\left(C\left(u_{i},u_{j}\right)\right)_{0\leq i,j\leq n}$ are given by
\begin{equation}
\mu\left(u_{i}\right):=\bE\left[X_{T}^{u_{i}}\right]=X_{0}^{u_{i}}-\frac{1}{2}\int_{0}^{T}K\left(u_{i},s\right)^{2}\dd s=X_{0}^{u_{i}}-\frac{\eta^{2}}{4H}\left(u_{i}^{2H}-\left(u_{i}-T\right)^{2H}\right),\label{eq:mean:ti}
\end{equation}
and 
\begin{equation}
C\left(u_{i},u_{j}\right):=\Cov\left(X_{T}^{u_{i}},X_{T}^{u_{j}}\right)=\int_{0}^{T}K\left(u_{i},s\right)K\left(u_{j},s\right)\dd s=\eta^{2}\int_{0}^{T}\left(u_{i}-s\right)^{H-\frac{1}{2}}\left(u_{j}-s\right)^{H-\frac{1}{2}}\dd s.\label{eq:variance:ti:tj}
\end{equation}
The diagonal terms of the covariance matrix can be computed explicitly
\[
C\left(u_{i},u_{i}\right)=\frac{\eta^{2}}{2H}\left(u_{i}^{2H}-\left(u_{i}-T\right)^{2H}\right),
\]
while the off-diagonal terms can be expressed in terms of the hypergeometric function $_{2}F_{1}(\cdot,\cdot;\cdot;\cdot)$ (see \citep[Chapter 15]{olver2010nist}): if $i < j$,
\begin{multline*}
C\left(u_{i},u_{j}\right) = \frac{\eta^{2}\left(u_{j}-u_{i}\right)^{H-\frac{1}{2}}}{H+\frac{1}{2}}\Bigl[
u_{i}^{H+\frac12}{}_{2}F_{1}\bigl(\frac{1}{2}-H,\frac{1}{2}+H;\frac{3}{2}+H;-\frac{u_{i}}{u_{j}-u_{i}}\bigr)\\
-\left(u_{i}-T\right)^{H+\frac12}{}_{2}F_{1}\bigl(\frac{1}{2}-H,\frac{1}{2}+H;\frac{3}{2}+H;-\frac{u_{i}-T}{u_{j}-u_{i}}\bigr)
\Bigr].
\end{multline*}
Standard numerical libraries provide an efficient evaluation of hypergeometric functions that can be used for an offline computation of the coefficients $C\left(u_{i},u_{j}\right)$ for $i < j$.
We used the function $\texttt{scipy.special.hyp2f1}$ from the $\texttt{scipy}$ library \citep{virtanen2020scipy} in our numerical tests.

In order to design optimal multilevel estimators in section \ref{s:MLMC}, we need to rely on asymptotic estimates of the strong and weak error associated to the discretization schemes in Definition \ref{def:discr}.

\subsection{Strong convergence}

The discretization error for the squared VIX \eqref{eq:def:VIX2} was studied by \citep{horvath2018volatility}, who showed in \citep[Proposition 2]{horvath2018volatility} that the bias
$\esp \bigl[\varphi(\VIX_{T}^{2})\bigr] - \esp\bigl[\varphi(\Rn) \bigr]$ for a VIX option with Lipschitz payoff $\varphi$  is of order $\cO\left(\frac{1}{n}\right)$.
Inspection of their proof reveals that their argument actually yields an 
estimate of the $L^1$ strong error $\esp \left[\bigl| \varphi(\VIX_{T}^{2}) - \varphi(\Rn) \bigr| \right]$.
Extending an $L^1$ estimate to an $L^p$ estimate for any positive $p$ is straightforward in this setting; therefore, following the steps in the proof of \citep[Proposition 2]{horvath2018volatility},  we obtain upper bounds on the $L^p$ strong errors for the two discretization schemes presented in Definition \ref{def:discr}, see Proposition \ref{prop:strong:Lp:error} below.

Note that, in general, the initial condition $(X_0^u)_{u \ge T}$ for the instantaneous forward variance curve is an arbitrary function; in practical applications, $X_0^u$ is bootstrapped from market options data, from the VIX futures term-structure or possibly from Variance Swap quotes (according to the use the stochastic model is intended for). In any of these cases, the regularity of the curve $u \mapsto X_0^u$ is a user-based choice -- typically, it is either a continuous or a piece-wise constant function. In any case, it is more than enough to assume that the function $X_0^\cdot	$ is locally bounded.

\begin{prop}[$L_{p}$ strong error; a slight extension of \citep{horvath2018volatility}, Proposition 2]
\label{prop:strong:Lp:error} 
Assume the initial forward variance curve $u \mapsto e^{X_{0}^u}$ is locally bounded.
Then, for any $T >0$ and $p>0$, 
\begin{align}
\left(\mathbb{E}\left[\left|\VIX_{T}^{2}-\Rn\right|^{p}\right]\right)^{\frac{1}{p}} & =\cO\left(\frac{1}{n}\right), \label{eq:estimate:lp:error:rectangle}\\
\left(\mathbb{E}\left[\left|\VIX_{T}^{2}-\Tn\right|^{p}\right]\right)^{\frac{1}{p}} & =\cO\left(\frac{1}{n^{1+H}}\right),
\label{eq:estimate:lp:error:trapezoidal}
\end{align}
as $n \to \infty$.
\end{prop}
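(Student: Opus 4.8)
The plan is to bound the $L^p$ error by a sum of local $L^p$ increments and reduce everything to moduli of continuity of the underlying Gaussian field. Writing $Y_{T}^{u}:=X_{T}^{u}-X_{0}^{u}$ (a Gaussian process with mean $-\frac{\eta^{2}}{4H}(u^{2H}-(u-T)^{2H})$ and the covariance $C$ of \eqref{eq:variance:ti:tj}, which does not involve the initial curve), I factor $e^{X_{T}^{u}}=e^{X_{0}^{u}}e^{Y_{T}^{u}}$ and treat $e^{X_{0}^{u}}$ as a bounded multiplicative weight on the compact $[T,T+\Delta]$. For the rectangle scheme, the integral form of Minkowski's inequality yields
\[
\bigl\|\VIX_{T}^{2}-\Rn\bigr\|_{L^{p}}\le\frac{1}{\Delta}\sum_{i=1}^{n}\int_{u_{i-1}}^{u_{i}}\bigl\|e^{X_{T}^{u}}-e^{X_{T}^{u_{i}}}\bigr\|_{L^{p}}\,\dd u.
\]
Using $|e^{a}-e^{b}|\le|a-b|(e^{a}+e^{b})$ together with H\"older's inequality, each integrand is controlled by $\|X_{T}^{u}-X_{T}^{u_{i}}\|_{L^{2p}}$ times a uniformly bounded moment of the log-normal weight; since $X_{T}^{u}-X_{T}^{u_{i}}$ is Gaussian, its $L^{2p}$ norm is controlled by the mean increment $|\mu(u)-\mu(u_{i})|$ plus the standard deviation $\Var(X_{T}^{u}-X_{T}^{u_{i}})^{1/2}$. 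The whole problem thus reduces to sharp first- and second-order moduli of continuity of $u\mapsto X_{T}^{u}$.

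Next I establish these moduli, the essential point being the contrast between the behaviour near the left endpoint $u=T$, where the fractional kernel makes the process only H\"older-$H$, and the bulk $u-T\gtrsim h$, where it is smooth. For the variance, a mean-value bound on $s\mapsto(u-s)^{H-\frac12}$ gives $\Var(X_{T}^{u}-X_{T}^{u'})\lesssim|u-u'|^{2}\int_{0}^{T}(u\wedge u'-s)^{2H-3}\dd s\lesssim|u-u'|^{2}(u\wedge u'-T)^{2H-2}$, so on interval $i\ge2$ the standard deviation is $\lesssim h^{H}(i-1)^{H-1}$; on the first interval, a direct scaling reducing to the finite integral $\int_{0}^{\infty}\bigl((\tau+1)^{H-\frac12}-\tau^{H-\frac12}\bigr)^{2}\dd\tau<\infty$ gives the H\"older bound $\lesssim h^{H}$. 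For the mean, the explicit formula shows the only non-Lipschitz contribution is $(u-T)^{2H}$, with increment $\lesssim h^{2H}$ on the first interval and $\lesssim h^{2H}(i-1)^{2H-1}$ on later ones. This is the main obstacle: pinning down the exact exponents at $u=T$ and tracking the $(i-1)$-dependent prefactors, because the resulting sums sit exactly at the threshold of convergence.

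For the rectangle scheme I then sum the local bounds. The standard-deviation contribution is of order $\frac{h}{\Delta}\bigl(h^{H}+h^{H}\sum_{j=1}^{n-1}j^{H-1}\bigr)$; since $H>0$ one has $\sum_{j=1}^{n-1}j^{H-1}\sim n^{H}/H$, and substituting $h=\Delta/n$ this collapses to $\cO(1/n)$. The mean contribution is handled identically: the Lipschitz part gives $\cO(h)=\cO(1/n)$, and the rough part gives $\frac{h^{1+2H}}{\Delta}\sum_{j=1}^{n-1}j^{2H-1}\sim\cO(1/n)$, again because $2H>0$. This proves \eqref{eq:estimate:lp:error:rectangle}.

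For the trapezoidal scheme the local error on $[u_{i-1},u_{i}]$ is the integral of $e^{X_{T}^{u}}$ minus its chord. Splitting $e^{X_{T}^{u}}-\mathrm{chord}=(e^{X_{T}^{u}}-e^{L_{i}(u)})+(e^{L_{i}(u)}-\mathrm{chord})$, with $L_{i}$ the linear interpolant of $X_{T}^{\cdot}$, reduces matters to (i) the linear-interpolation error of $X_{T}$, governed by its second-order modulus, and (ii) the convexity defect of $\exp$, governed by the square of the first-order increment. Term (ii) is subdominant, of order $\cO(h^{1+2H})$ after summation. For term (i), the first interval contributes $h\cdot h^{H}=h^{1+H}$ through the H\"older modulus, while on interval $i\ge2$ the second derivative $\partial_{u}^{2}X_{T}^{u}$ has standard deviation $\lesssim(u-T)^{H-2}\sim((i-1)h)^{H-2}$, giving a local error $\lesssim h^{1+H}(i-1)^{H-2}$ whose sum converges since $H-2<-1$. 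Both pieces are $\cO(h^{1+H})=\cO(n^{-(1+H)})$, which is \eqref{eq:estimate:lp:error:trapezoidal}. The delicate feature throughout is that the trapezoidal gain over the rectangle is limited to exactly $H$ extra orders by the non-smoothness at $u=T$, so the local estimates must be sharp there while remaining merely summable in the bulk.
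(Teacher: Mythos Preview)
Your argument is correct. For the rectangle scheme it follows essentially the paper's route: Minkowski reduces to local increments, the exponential difference is linearised (you use $|e^a-e^b|\le|a-b|(e^a+e^b)$ and bound $\|e^{X_T^u}+e^{X_T^{u_i}}\|_{2p}$ directly, while the paper uses the mean-value form and controls the weight via the exponential integrability of $\sup_u Z_u$ --- your version is slightly more elementary), and the Gaussian increment is reduced to the kernel modulus. One cosmetic difference: the paper quotes the kernel estimate with weight $(u_2-T)^{H-1}$ at the \emph{larger} endpoint $u_2$, so on the right-point grid $u_2=u_i\ge T+h$ and the singularity at $u=T$ never appears; your mean-value bound naturally produces the smaller endpoint, which is why you need the separate $i=1$ analysis.

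For the trapezoidal scheme your decomposition $e^{X_T^u}-\mathrm{chord}=(e^{X_T^u}-e^{L_i(u)})+(e^{L_i(u)}-\mathrm{chord})$ is a genuinely different packaging. The paper bypasses it by invoking directly the second-order kernel inequality
\[
\Bigl(\int_0^T\bigl(K(u_2,s)-\tfrac{u_3-u_2}{u_3-u_1}K(u_1,s)-\tfrac{u_2-u_1}{u_3-u_1}K(u_3,s)\bigr)^2\dd s\Bigr)^{1/2}\le c\,(u_3-u_1)^2(u_3-T)^{H-2}
\]
from \cite{horvath2018volatility}, which is exactly the variance of your term $X_T^u-L_i(u)$; your convexity term (ii) then does not appear at all. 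What your route buys is transparency: it makes explicit that the gain of $H$ orders is capped by the H\"older-$H$ regularity at $u=T$ (your first-interval contribution $h^{1+H}$) and that the convexity of $\exp$ contributes only at the subdominant order $h^{1+2H}$. What the paper's route buys is brevity: a single kernel estimate, no splitting into two terms and no separate treatment of $i=1$.
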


\begin{rem}
Proposition \ref{prop:strong:Lp:error} holds for any value of $H$ in $(0,1)$.
\end{rem}

It is interesting to notice that the upper bound $\cO\left(\frac{1}{n}\right)$ on the strong rate of convergence of the rectangle scheme does not depend on the fractional index $H$; the dependence with respect to $H$ is only seen at the level of a more accurate quadrature method such as the trapezoidal scheme in \eqref{eq:estimate:lp:error:trapezoidal}.
\begin{rem} For comparison, the strong discretization error for the spot asset price $\dd S_t = S_t \, \sqrt{V_t} \, \dd Z_t$ is highly dependent on the value of $H$ in a model with instantaneous variance $V_t$ driven by a fractional process with H\"older regularity index $H$.
Here $Z$ denotes a standard Brownian motion.
\citep{neuenkirch2016order} show that an Euler--Maruyama scheme for $S$ based on $n$ discretization points has a strong error of order $\cO\Big(\frac{1}{n^H}\Big)$ when $V_t = e^{X_t}$ and $X$ is a fractional Ornstein--Uhlenbeck process.
Concerning weak errors, 
\citep{bayer2020weak} show that the weak error is $\cO\Big(\frac{1}{n^{H+\frac 12}}\Big)$  for payoffs $\varphi(S_T)$ with $\varphi \in \cC_b^p$ and $p= \lceil \frac 1H \rceil$ 
in a Gaussian stochastic volatility model where $V_t = X_t^t$ and $X_t^u$ is defined in \eqref{def:sde:X_T}.
\end{rem}	

Our next result is the exact asymptotics of the $L^{2}$ strong error for the rectangle scheme, which shows that the strong rate of convergence is precisely $\frac 1n$.
This result is additionally corroborated by some numerical experiments we present in the next section, which also provide numerical evidence that the strong rate of the trapezoidal scheme is precisely $\frac 1{n^{1+H}}$ (Figure \ref{fig:strong:weak:error}).

\begin{thm}[Exact asymptotics for the $L^{2}$ strong error, rectangle scheme]
\label{thm:strong:L2:error} Assume $H \in \bigl(0, \frac 12 \bigr)$ and that the initial instantaneous forward variance curve is kept constant over the VIX time window, that is $X_{0}^{u}=X_{0}$ for all $u\in\left[T,T+\Delta\right]$.
Then, as $n\to\infty,$ 
\begin{equation}\label{eq:L2:strong:error:rectangle}
\left(\mathbb{E}\left[\left|\VIX_{T}^{2}-\Rn\right|^{2}\right]\right)^{1/2}
\sim 
\frac{\Lambda\left(X_{0},T,\Delta,H\right)}{n}
\end{equation}
where 
\begin{equation*}
\Lambda\left(X_{0},T,\Delta,H\right)  :=
\frac{e^{X_{0}}}{2}\left(e^{\eta^{2}\frac{T^{2H}}{2H}}+e^{\eta^{2}\frac{\left(T+\Delta\right)^{2H}-\Delta^{2H}}{2H}}-2e^{\eta^{2}\int_{0}^{T}t^{H-\frac{1}{2}}\left(\Delta+t\right)^{H-\frac{1}{2}}\dd t}\right)^{1/2}.
\end{equation*}
\end{thm}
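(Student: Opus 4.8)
The plan is to reduce the $L^2$ error to a purely deterministic quadrature quantity by exploiting the log-normal structure, and then to carry out a first-order Euler--Maclaurin analysis in which the only delicacy is the boundary $u=T$, where the map $u\mapsto X_T^u$ loses regularity.

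First I would record two exact moment identities. Writing $Y_u:=e^{X_T^u}$, the martingale property of the forward variance (equivalently, the relation $\mu(u)+\tfrac12 C(u,u)=X_0$ read off from \eqref{eq:mean:ti}--\eqref{eq:variance:ti:tj} under the assumption $X_0^u\equiv X_0$) gives $\bE[Y_u]=e^{X_0}$ for every $u$, and for the cross moment $\bE[Y_uY_w]=e^{2X_0+C(u,w)}$. In particular $\VIX_T^2$ and $\Rn$ share the mean $e^{X_0}$, so $\bE\bigl[(\VIX_T^2-\Rn)^2\bigr]$ is the variance of the difference. Setting $g(u,w):=e^{C(u,w)}$ and expanding the square, everything collapses to $\bE[(\VIX_T^2-\Rn)^2]=e^{2X_0}(E\otimes E)[g]$, where $E[f]:=\tfrac1\Delta\int_T^{T+\Delta}f-\tfrac1n\sum_{j=1}^n f(u_j)$ is the right-point rectangle error functional applied in each variable; equivalently, writing $D_i:=\int_{u_{i-1}}^{u_i}(Y_u-Y_{u_i})\dd u$, one has $\VIX_T^2-\Rn=\tfrac1\Delta\sum_i D_i$ and $\bE[(\VIX_T^2-\Rn)^2]=\tfrac1{\Delta^2}\sum_{i,j}\bE[D_iD_j]$, with $\bE[D_iD_j]$ the cell-wise second difference of $g$.

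Next I would isolate the leading term by the classical first-order Euler--Maclaurin heuristic. For $u>T$ the kernel $K(u,\cdot)$ is smooth, so $u\mapsto Y_u$ is differentiable with $\partial_uY_u=Y_u\,\partial_uX_T^u$, and on each cell $D_i\approx-\tfrac{h^2}2\partial_uY_{u_i}$. Summing telescopically, $\sum_iD_i\approx-\tfrac h2\bigl(Y_{T+\Delta}-Y_T\bigr)$, whence $\bE[(\VIX_T^2-\Rn)^2]\approx\tfrac1{4n^2}\bE[(Y_{T+\Delta}-Y_T)^2]$. Finally I would evaluate $\bE[(Y_{T+\Delta}-Y_T)^2]=e^{2X_0}\bigl(g(T,T)+g(T+\Delta,T+\Delta)-2g(T,T+\Delta)\bigr)$ using the explicit diagonal values $C(T,T)=\tfrac{\eta^2}{2H}T^{2H}$, $C(T+\Delta,T+\Delta)=\tfrac{\eta^2}{2H}((T+\Delta)^{2H}-\Delta^{2H})$, and the off-diagonal value $C(T,T+\Delta)=\eta^2\int_0^T t^{H-1/2}(\Delta+t)^{H-1/2}\dd t$ obtained from \eqref{eq:variance:ti:tj}; this reproduces exactly $\Lambda(X_0,T,\Delta,H)^2$, and taking square roots gives \eqref{eq:L2:strong:error:rectangle}.

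The main obstacle is to make this Euler--Maclaurin step rigorous, because $\partial_uX_T^u$ blows up like $(u-T)^{H-1/2}$ as $u\to T^+$ (and $\partial_{uw}g$ like $\mathrm{dist}((u,w),(T,T))^{2H-2}$), so the naive Taylor bound fails on every cell touching $u=T$ or $w=T$. I would therefore split $\sum_{i,j}\bE[D_iD_j]$ into the bulk $i,j\ge2$ and the first row, column and corner. For the bulk, $Y$ is smooth and the per-cell Taylor expansion, together with the integrability $\partial_{uw}g\in L^1([T,T+\Delta]^2)$ (which holds precisely because $H>0$), yields $\tfrac1{\Delta^2}\sum_{i,j\ge2}\bE[D_iD_j]=\tfrac1{4n^2}\bE[(Y_{T+\Delta}-Y_T)^2]+o(n^{-2})$. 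For the boundary cells I would invoke the increment estimate underlying Proposition \ref{prop:strong:Lp:error}, namely $\bE[(Y_u-Y_w)^2]=\cO(|u-w|^{2H})$ near $T$, to obtain $\bE[D_1^2]=\cO(h^{2+2H})$, together with the bulk bound $\bE[D_j^2]\lesssim h^4\,\bE[(\partial_uY_{u_j})^2]\asymp h^{2+2H}j^{2H-2}$; a Cauchy--Schwarz estimate then gives $\sum_{j\ge2}|\bE[D_1D_j]|\lesssim h^{1+H}\sum_{j\ge2}h^{1+H}j^{H-1}\asymp h^{2+2H}n^{H}=\cO(h^{2+H})=o(n^{-2})$, and symmetrically for the column and corner. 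Since $2+2H>2$ and $2+H>2$ for every $H\in(0,1)$, all boundary contributions are genuinely subleading, which is exactly where---and the only place where---the roughness of the kernel enters the $L^2$ error.
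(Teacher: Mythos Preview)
Your proposal is correct and takes a genuinely different route from the paper. Both proofs start from the same reduction: writing $\bE[D_iD_j]$ as a second difference of $e^{C(\cdot,\cdot)}$, one arrives (in the paper's shifted coordinates $x=u-T$, $y=w-T$) at the representation
\[
\frac{n^{2}}{e^{2X_{0}}}\,\bE\bigl[\lvert\VIX_{T}^{2}-\Rn\rvert^{2}\bigr]
=\int_{0}^{\Delta}\!\int_{0}^{\Delta}F(x,y)\,\psi_{n}(x)\,\psi_{n}(y)\,\dd x\,\dd y,
\]
where $F=\partial_{xy}e^{g}$ is positive and integrable and $\psi_{n}(x)=\{nx/\Delta\}$ is the sawtooth weight coming from the factors $1-\frac{ih-x}{h}$. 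From here the arguments diverge. The paper establishes the monotonicity $\partial_{x}F<0$, $\partial_{y}F<0$ --- this is precisely where the hypothesis $H\in(0,\tfrac12)$ is actually used --- and squeezes the integral between the lower Riemann sum $\tfrac{h^{2}}{4}\sum_{i,j}F(ih,jh)$ and an explicit upper bound, both converging to $\tfrac14\int_{(0,\Delta)^2}F$. Your route instead peels off the boundary cells via increment estimates and Cauchy--Schwarz (your bounds $\bE[D_1^2]=\cO(h^{2+2H})$ and $\sum_{j\ge2}\lvert\bE[D_1D_j]\rvert=\cO(h^{2+H})$ are correct), and treats the bulk using only $F\in L^{1}$; the latter amounts to the weak-$*$ convergence of the oscillating weight $\psi_{n}\otimes\psi_{n}$ to its mean $\tfrac14$, tested against the integrable $F$. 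Your argument is slightly more hands-on but avoids monotonicity altogether, and would therefore extend verbatim to $H\in(\tfrac12,1)$, whereas the paper's squeeze as written does not. One small remark: your bulk sentence conflates the random Taylor heuristic ($D_i\approx-\tfrac{h^{2}}{2}\partial_uY_{u_i}$) with the deterministic integrability argument; for a clean proof it is simpler to stay on the deterministic side once the reduction to $F$ has been made and invoke the weak-convergence statement directly, rather than going back and forth between the random and averaged pictures.
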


\noindent The proofs of Proposition \ref{prop:strong:Lp:error} and Theorem \ref{thm:strong:L2:error} are postponed to Appendix \ref{s:appendix}.
\medskip

\paragraph{Weak error.}
The discretization bias for the price of a VIX option $\esp[ \varphi\left(\VIX_{T}^{2}\right)]$ with Lipschitz payoff $\varphi$ can, of course, be upper bounded by the $L^1$ strong error multiplied by the Lipschitz constant of $\varphi$; it is a straightforward consequence of Proposition \ref{prop:strong:Lp:error}  (or Theorem \ref{thm:strong:L2:error} for the rectangle scheme)
that
\begin{align}
\left|\bE\left[\varphi\left(\VIX_{T}^{2}\right)-\varphi\left(\Rn\right)\right]\right| & =\cO\left(\frac{1}{n}\right),\label{eq:weak:error:rectangle}\\
\left|\bE\left[\varphi\left(\VIX_{T}^{2}\right)-\varphi\left(\Tn\right)\right]\right| & =\cO\left(\frac{1}{n^{1+H}}\right).\label{eq:weak:error:trapeze}
\end{align}
In our numerical experiments in section \ref{s:numerics} we can precisely observe the asymptotic behavior predicted by \eqref{eq:weak:error:rectangle}-\eqref{eq:weak:error:trapeze},
in the case of VIX call options.
\begin{rem}\label{rem:lipschitz:payoff}
A $\VIX$ call option with strike $\kappa>0$ corresponds to the payoff $\varphi(x) = (\sqrt{x} - \kappa)_+$. In such a case, the function $\varphi$ is Lipschitz, for it coincides with the integral of its bounded derivative $\varphi'$ given by $\varphi'(x)  = \frac {1}{2 \sqrt x} \leq \frac {1}{2\kappa}$ for $x \maj \kappa^2$ and $\varphi'(x) =0$ for $x \mino \kappa^2$.
\end{rem}

\begin{rem}[Non-uniform grids] \label{rem:non_unif_grid}
Using a non-uniform discretization grid such as 
\be  \label{e:non_uniform_grid}
\biggl\{ u_{i}:=T+ \Delta \biggl(\frac i n\biggr)^{a}, \quad i=0,\dots,n\biggr\} , \quad a > 0,
\ee
one can improve the asymptotic behavior of the weak error: according to \citep[Corollary 1]{horvath2018volatility}, the weak error associated to the trapezoidal scheme based on the grid \eqref{e:non_uniform_grid} is $\cO(n^{-2})$ instead of $\cO(n^{-(1+H)})$ in \eqref{eq:weak:error:trapeze}, provided that $a >\frac{2}{H+1}$. 
Note that with such a value of $a$, the resulting mesh \eqref{e:non_uniform_grid} gets finer and finer as $u_i$ approaches $T$, in order to compensate for the explosion of the integration kernel $K\left(u, s\right)= \frac \eta {\left(u-s\right)^{\frac{1}{2}-H}}$ when $u$ approaches $T$ from above in  \eqref{def:option:VIX2} and $s$ approaches $T$ from below in \eqref{def:sde:X_T}.
It will be seen in section \ref{s:MLMC} that the optimal complexity $\cO\left(\ve^{-2}\right)$ for an estimator with mean-squared error $\ve^2$ can already be attained combining the multilevel method with a discretization scheme based on the uniform grid $\cG_n$ in \eqref{eq:def:uniform:grid}.
Appealing to the more complex mesh choice \eqref{e:non_uniform_grid} is therefore not necessary in order to construct estimators of $\VIX$ options with optimal rate of convergence.
\end{rem}

\subsection{Control variate\label{subsec:control:variate} for VIX option pricing}
In order to obtain a reliable Monte-Carlo estimate of the weak error, we need to employ a variance reduction method. 
Due to the structure of the problem, there is an efficient control variate technique (in the spirit of the control variate for Asian option pricing in \citep{kemna1990pricing}), already exploited by \citep{horvath2018volatility}, that we now present in detail.

Noticing that the discretization schemes $\Rn$ and $\Tn$ in \eqref{eq:def:right:point:rectangle:scheme} and \eqref{eq:def:trapezoidal:scheme} are given by means of the exponentials $e^{X_{T}^{u_i}}$, it is reasonable to consider as a control variate the exponential of the mean $e^{\frac{1}{n}\sum_{i=1}^{n}X_{T}^{u_i}}$ -- this choice is appropriate notably because the parameter $u_i $ is restricted to a small time window $\Delta \approx \frac 1{12}$.
The key ingredient is that the control variate $e^{\frac{1}{n}\sum_{i=1}^{n}X_{T}^{u_i}}$ is lognormal: the Gaussian random variable $\frac{1}{n}\sum_{i=1}^{n}X_{T}^{u_i}\egl\cN\bigl(\mu^{n},\left(\sigma^{n}\right)^{2}\bigr)$ has explicit characteristics
\begin{equation}
\begin{split}\mu^{n} & =\frac{1}{n}\sum_{i=1}^{n}\mu\left(u_{i}\right)=\frac{1}{n}\sum_{i=1}^{n}\left(X_{0}^{u_{i}}-\frac{1}{2}\int_{0}^{T}K\left(u_{i},s\right)^{2}\dd s\right),\\
\sigma^{n} & =\sqrt{\frac{1}{n^2}\sum_{i,j=1}^{n}C\left(u_{i},u_{j}\right)},
\end{split}
\label{eq:mupn:sigmapn}
\end{equation}
so that the prices of call or put options on the control variate boil down to a Black-Scholes formula.	
When considering the right-point rectangle scheme \eqref{eq:def:right:point:rectangle:scheme},
we estimate option prices on the $\VIX^{2}$ with 
\be \label{e:option_price_with_cv}
\bE\biggl[\varphi\biggl(\frac{1}{n}\sum_{i=1}^{n}e^{X_{T}^{u_{i}}}\biggr)\biggr]
\approx
\frac{1}{M}\sum_{m=1}^{M}
\biggl[\varphi\biggl(\frac{1}{n}\sum_{i=1}^{n}e^{X_{T}^{u_{i},m}}\biggr)-\varphi\Big(e^{\frac{1}{n}\sum_{i=1}^{n}X_{T}^{u_{i},m}}\Big)
\biggr]
+ \mathrm{CV}_{n},
\ee
where $\left(X_{T}^{u_{1},m},\dots,X_{T}^{u_{n},m}\right)_{1\leq m\leq M}$ are  $M$ independent Monte Carlo samples of the Gaussian vector $\left(X_{T}^{u_{1}},\dots,X_{T}^{u_{n}}\right)$ and $\mathrm{CV}_{n}:=\bE\left[\varphi\left(e^{\frac{1}{n}\sum_{i=1}^{n}X_{T}^{u_{i}}}\right)\right]$.
In the case of VIX call and put options, $\mathrm{CV}_{n}$ is given by
\be \label{e:explicit_price_CV}
\begin{aligned}
\mathrm{CV}_{n}=\begin{cases}
C_{\BS}\left(e^{\frac{\mu^{n}}{2}+\frac{\left(\sigma^{n}\right)^{2}}{8}},\kappa,\frac{\sigma^{n}}{2}\right)
& \text{if }\varphi\left(x\right)=\left(\sqrt{x}-\kappa\right)_{+},
\\
P_{\BS}\left(e^{\frac{\mu^{n}}{2}+\frac{\left(\sigma^{n}\right)^{2}}{8}},\kappa,\frac{\sigma^{n}}{2}\right) & \text{if }\varphi\left(x\right)=\left(\kappa-\sqrt{x}\right)_{+},\\
e^{\frac{\mu^{n}}{2}+\frac{\left(\sigma^{n}\right)^{2}}{8}} & \text{if }\varphi\left(x\right)=\sqrt{x},
\end{cases}
\end{aligned}
\ee
where $C_{\BS}$ and $P_{\BS}$ are the standard Black--Scholes call and put prices $C_{\BS}\left(x,y,z\right) =x\,\Phi\left(\frac{\ln\left(x/y\right)}{z}+\frac{z}{2}\right) - y \,\Phi\left(\frac{\ln\left(x/y\right)}{z}-\frac{z}{2}\right)$ and $P_{\BS}\left(x,y,z\right) =y \, \Phi\left(-\frac{\ln\left(x/y\right)}{z}+\frac{z}{2}\right) - x\,\Phi\left(-\frac{\ln\left(x/y\right)}{z}-\frac{z}{2}\right)$, $\Phi$ being the c.d.f.\ of the standard normal distribution.
For the trapezoidal scheme \eqref{eq:def:trapezoidal:scheme}, we use 
\be \label{e:CV_trap}
\frac{1}{2}\Big[
\varphi\left(e^{\frac{1}{n}\sum_{i=1}^{n}X_{T}^{u_{i}}}\right)
+ \varphi\left(e^{\frac{1}{n}\sum_{i=1}^{n}X_{T}^{u_{i-1}}}\right)\Big]
\ee
as a control variate, where both the variables $e^{\frac{1}{n}\sum_{i=1}^{n}X_{T}^{u_{i}}}$ and $e^{\frac{1}{n}\sum_{i=1}^{n}X_{T}^{u_{i-1}}}$ are log-normal.
For call and put options, the expectation $\esp \bigl[ e^{\frac{1}{n}\sum_{i=1}^{n}X_{T}^{u_{i-1}}} \bigr]$ has a similar expression to  \eqref{e:explicit_price_CV}.

\subsection{Numerical tests} \label{s:numerics}

From now on, we will use the notation $\Dn$ to denote either the right-point rectangle scheme $\Rn$ when $\cD_{n}=\cR_{n}$ or the trapezoidal scheme $\Tn$ when $\cD_{n}=\cT_{n}$.
In order to check the $L^{2}$ strong
error asymptotics \eqref{eq:L2:strong:error:rectangle}
for the rectangle scheme and the $L^{2}$ estimate \eqref{eq:estimate:lp:error:trapezoidal}
for the trapezoidal scheme, we consider three different sets of parameters $X_{0},\eta,T,H$ (specified in the legends of Figures \ref{fig:strong:weak:error}  and \ref{fig:strong:trapeze}) and set $\Delta=\frac{1}{12}$.
Since we cannot simulate exactly the random variable $\VIX_{T}^{2}$,
we approximate $\VIX_{T}^{2}$ with $\VIX_{T}^{2,\cD_{n_{\rm ref}}}$ using a high number ${n_{\rm ref}}$ of discretization points, and approximate the strong error or the scheme $\Dn$ with a Monte Carlo estimate of $\bE\left[\left|\VIX_{T}^{2,\cD_{n_{{\rm ref}}}}-\Dn\right|^{2}\right]^{\frac{1}{2}} \approx \bE\left[\left|\VIX_{T}^{2}-\Dn\right|^{2}\right]^{\frac{1}{2}}$ for values of $n \ll n_{{\rm ref}}$.
We choose $n_{{\rm ref}}=2000$ and generate $M=10^{5}$ Monte Carlo samples. 
\begin{figure}[t]
	\begin{centering}
		\includegraphics[scale=0.5]{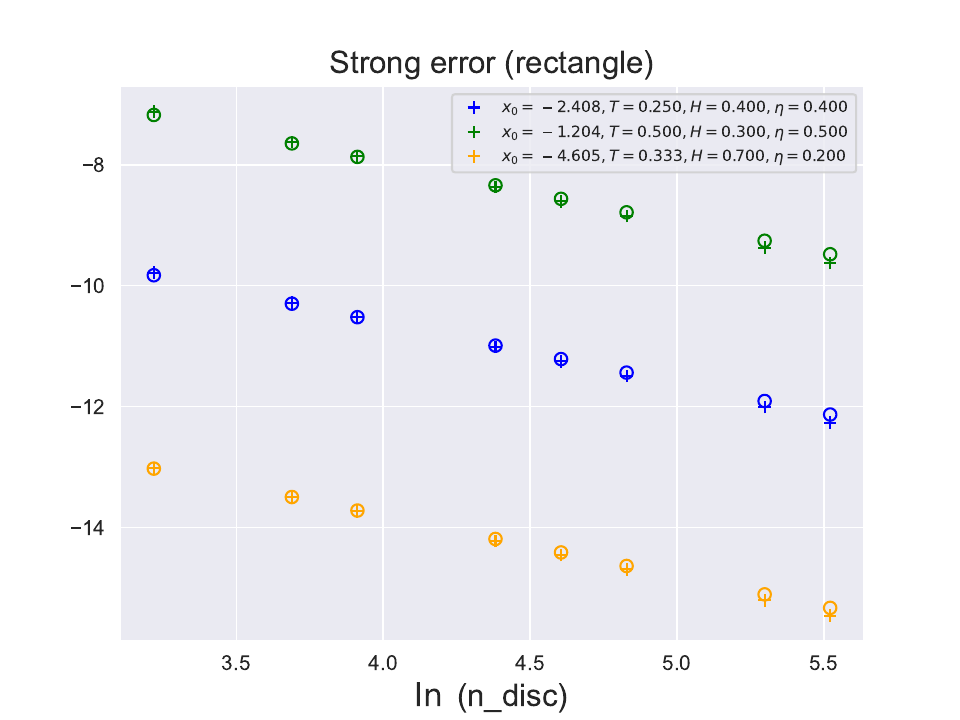}
		\hspace{-5mm}
		\includegraphics[scale=0.5]{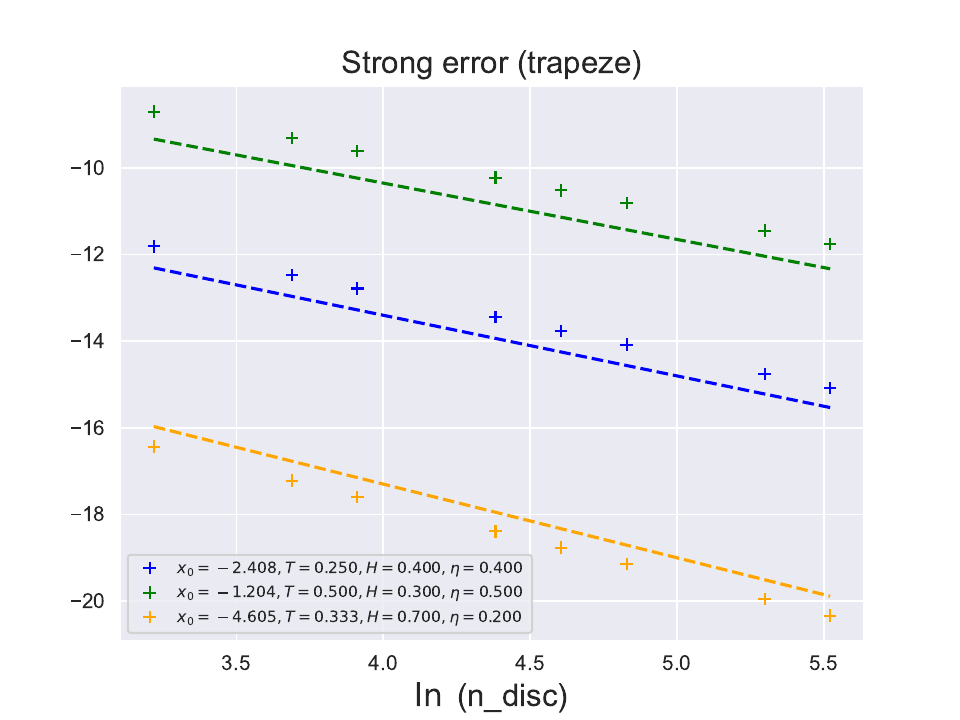} 
		\par\end{centering}
	\caption{\emph{Left}: log-log plot of a Monte Carlo estimate of the $L^{2}$ strong error of the rectangle scheme $\Rn$ (marker ``$+$'') together with its asymptotic expansion in \eqref{eq:L2:strong:error:rectangle} (marker ``$\circ$'').
		\emph{Right}: Monte Carlo estimate of the $L^{2}$ strong error
		of the trapezoidal scheme $\Tn$ (marker ``$+$'') and, for comparison, straight lines with slope $-(1+H)$.}
	\label{fig:strong:weak:error} 
\end{figure}

\noindent Figure \ref{fig:strong:weak:error} confirms the asymptotic expansion obtained in Theorem
\ref{thm:strong:L2:error} for the rectangle scheme and the $\cO(n^{-(1+H)})$ estimate \eqref{eq:estimate:lp:error:trapezoidal} for the trapezoidal scheme.
Note also that, as expected,
the error of the trapezoidal scheme
is smaller than that of the rectangle scheme.

We also want to check the weak convergence rate  for VIX call options.
We estimate the weak error for a call option $(\VIX_{T} - \kappa)^+$ with parameters $X_{0}=-2.896 \approx \ln\left(0.235^{2}\right)$, $\eta=0.5$, $H=0.3$, $T=\frac{1}{4}$, strike price $\kappa=0.1$, and consider discretisation grids with
$n\in\left[5, 14\right]$ points.
The control variate described in section \ref{subsec:control:variate} is used to reduce the variance of the Monte Carlo estimation. 
The reference price $\esp\left[ (\VIX_{T} - \kappa)^+ \right] \approx 0.13093742\pm5\times10^{-8}$ is estimated via
a rectangle scheme with $400$ discretization points and $3\times10^{6}$ Monte Carlo samples, again using the control variate in section \ref{subsec:control:variate}.

\noindent 
\begin{figure}[t]
	\begin{centering}
		\includegraphics[scale=0.5]{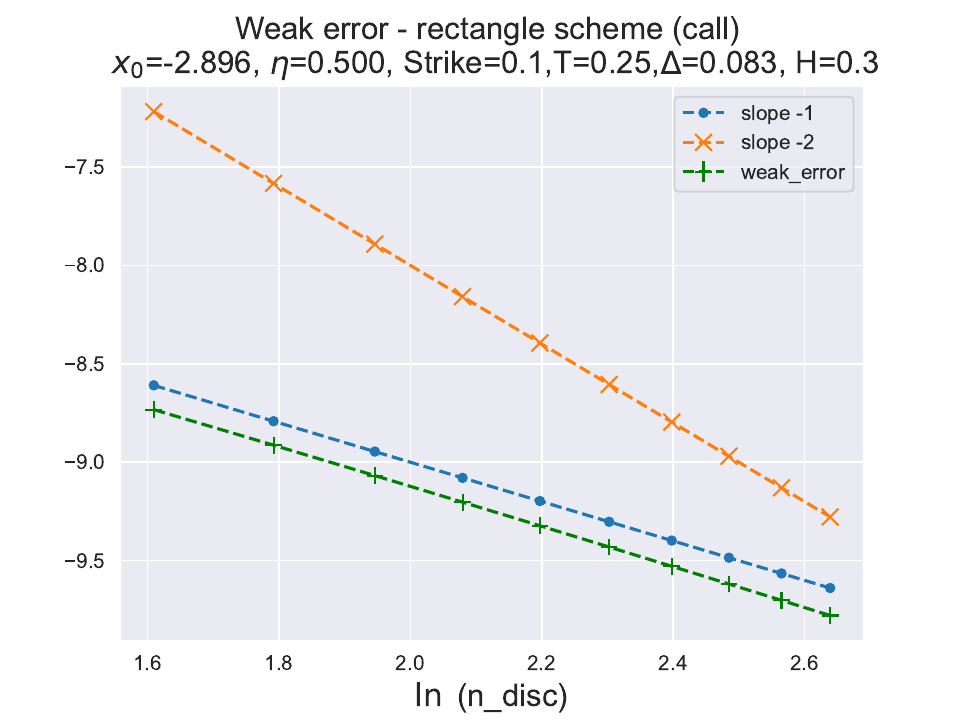}
		\hspace{-5mm}
		\includegraphics[scale=0.5]{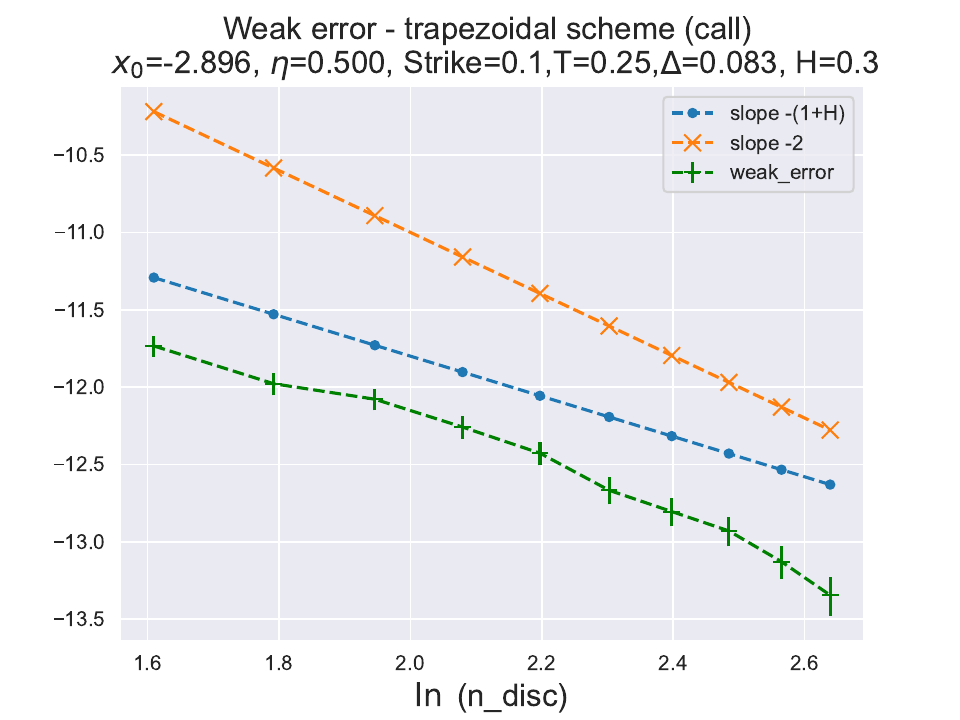} 
		\par\end{centering}
	\caption{\emph{Left}: log-log plot of a Monte-Carlo estimate of the weak error for a call option on the rectangle scheme $\Rn$ (marker ``$+$'') and, for comparison, straight lines with slopes equal to $-1$ and $-2$ (marker ``$--$'').
		\emph{Right}:  weak error for a call option on the trapezoidal scheme $\Tn$
		and, for comparison, straight lines with slopes $-(1+H)$ and $-2$ (using the same marker convention as in the left pane).}
	\label{fig:strong:trapeze} 
\end{figure}

\noindent In Figure \ref{fig:strong:trapeze}, we can observe the weak convergence rates $\frac 1 n$ and $\frac 1 {n^{1+H}}$ for the two discretization schemes, in alignment with estimates \eqref{eq:weak:error:rectangle} and \eqref{eq:weak:error:trapeze} (the VIX call payoff being a Lipschitz function of the VIX squared, as discussed in Remark \ref{rem:lipschitz:payoff}).

\section{Multilevel Monte Carlo}

\label{sec:mlmc:vix:mlmc:estimator}

\subsection{Standard Monte-Carlo estimation}

Once  the simulation scheme $\Dn$ corresponding to either the rectangle $\Rn$ or the trapezoidal scheme $\Tn$ has been fixed, the plain Monte Carlo estimator of the expectation $\bE\bigl[\varphi\bigl(\VIX_T^2 \bigr)\bigr]$,
\be  \label{e:MC}
\widehat{P}_{M}^{\cD_{n}}=\frac{1}{M}\sum_{m=1}^{M}\varphi\Bigl(\VIX_{T,m}^{2,\cD_{n}}\Bigr),
\ee 
based on $M$ independent copies $\bigl( \VIX_{T,m}^{2,\cD_{n}} \bigr)_{1 \le m \le M}$ of $\Dn$, has a 
mean-squared error that can be decomposed in a bias and a variance term
\begin{equation*}
	\MSE 
	=
	\bE\left[\left(\bE\left[\varphi\left(\VIX_{T}^{2}\right)\right]-\widehat{P}_{M}^{\cD_{n}}\right)^{2}\right]\\
	=
	\bE\left[\varphi\left(\VIX_{T}^{2}\right)-\varphi\left(\Dn\right)\right]^{2}
	+ \frac 1 M \Var\left(\varphi\left(\VIX_{T}^{2,\cD_{n}}\right)\right).
\end{equation*}
The construction of $\Dn$ requires to sample the $(n+1)$-dimensional Gaussian vector $\left(X_{T}^{u_{i}}\right)_{i=0,\dots n}$ with given
characteristics \eqref{eq:mean:ti} and \eqref{eq:variance:ti:tj}.
We consider exact simulation of $\left(X_{T}^{u_{i}}\right)_{i}$ based on the Cholesky decomposition $L_n L^T_n$
of the covariance matrix \eqref{eq:variance:ti:tj}.
We assume that the required Cholesky matrix $L_n$ for each desired value of $n$ has been computed once and for all as offline work, so that sampling a $(n+1)$-dimensional Gaussian vector requires $\cO\left(n^{2}\right)$ operations due to the matrix multiplication $L_n G$, with $G \sim \mathcal N(0, \mathrm{Id}_n)$.\footnote{Unfortunately, circulant embedding methods (see e.g.\ \citep[Chapter XI, section 3]{asmussen2007stochastic}), which rely on the fast Fourier transform and achieve a complexity of $\cO\left(n\log(n)\right)$ in the simulation of stationary Gaussian processes, do not seem to be straightforward to apply here since the process $(X_{T}^u)_{u \ge T}$ and its increment process fail to be stationary.}
The construction of $\Dn$ in \eqref{eq:def:trapezoidal:scheme} or \eqref{eq:def:right:point:rectangle:scheme}
from $\left(X_{T}^{u_{i}}\right)_{i}$ requires an additional sum over $n$ points, at a cost $\cO\left(n\right)$.
If we want the mean-squared error to satisfy $\MSE\leq\ve^{2}$
for a given accuracy $\ve>0$, we have to set $M=\cO\left(\ve^{-2}\right)$.
If $\varphi$ is Lipschitz, then according to the weak error behavior in \eqref{eq:weak:error:rectangle} and \eqref{eq:weak:error:trapeze}, we choose $n=\cO\left(\ve^{-1}\right)$ when $\cD_{n}=\cR_{n}$ (rectangle scheme) and $n=\cO\left(\ve^{-\frac{1}{1+H}}\right)$ when $\cD_{n}=\cT_{n}$ (trapezoidal scheme). 
Consequently, the total computational cost for the plain Monte Carlo estimator \eqref{e:MC} is, when $\cD_{n}=\cR_{n}$,
\[
\Cost_{\mathrm{MC}}^{\cR_{n}}=
\left(\cO\left(n^{2}\right)+\cO\left(n\right)\right)\times M
=\cO\left(\ve^{-4}\right),
\]
while, when $\cD_{n}=\cT_{n}$, 
\be \label{e:cost_MC_trap}
\Cost_{\mathrm{MC}}^{\cT_{n}}
=
\left(\cO\left(n^{2}\right)+\cO\left(n\right)\right)\times M
=
\cO\left(\ve^{-2 \left(1+\frac{1}{1+H} \right)}\right).
\ee
Note that the exponent $2 \left(1+\frac{1}{1+H} \right)$ takes values between $-4$ (when $H \to 0$) and $-3$ (when $H \to 1$). 
With reference to Remark \ref{rem:non_unif_grid}, considering a Monte-Carlo estimator \eqref{e:MC} based on a trapezoidal scheme with the non-uniform grid \eqref{e:non_uniform_grid} and $a (1+H) \maj 2$, the same error analysis leads to an overall computational cost of order $\cO(\ve^{-3})$ under the constraint $\MSE\leq\ve^{2}$, which is an improvement with respect to \eqref{e:cost_MC_trap}, but still far from the asymptotically optimal complexity $\cO(\ve^{-2})$.

\subsection{Multilevel scheme} \label{s:MLMC}

Let $L\in\mathbb{N}^{*}$ and let $\bm{n}=\left(n_{0},\dots n_{L}\right)\in\left(\bN^{*}\right)^{L+1}$ and $\bm{M}=\left(M_{0},\dots,M_{L}\right)$ be multi-indexes 
representing respectively an increasing sequence of time steps
and a sequence of Monte Carlo sample sizes.
For the ease of notation,
we set for all $\ell=0,\dots,L$, 
\[
P_{\ell}^{\cD}:=\varphi\left(\VIX_{T}^{2,\cD_{n_{l}}}\right).
\]
The index $\ell$ refers to the discretization \emph{level}; $P_{\ell}^{\cD}$ is an approximation of the VIX option payoff $\varphi(\VIX_T^2)$ at level $\ell$.
Given $n_{0}\in\bN^{*}$, we set 
\[
n_{\ell} = n_{0}2^{\ell}, \qquad \ell=0,\dots,L ,
\]
which means we double the number of time steps from one level to the next.
This implies that, once we have generated the $(n_\ell+1)$-dimensional Gaussian sample $(X_T^{u_i})_{u_i \in \mathcal G_{n_\ell}}$ entering into $P_{\ell}^{\cD}$ for some $\ell \ge 1$,  we can extract a $(n_{\ell-1}+1)$-dimensional Gaussian sample that we use to construct $P_{\ell-1}^{\cD}$, by selecting the $X_T^{u_{2 k}}$ for $k = 1, \dots, n_0 2^{\ell-1}$.

The multilevel method 
is inspired by the decomposition of the bias of $P_{L}^{\cD}$ as a telescopic sum:
\[
\bE\left[P_{L}^{\cD}\right]=\bE\left[P_{0}^{\cD}\right] 
+ \sum_{\ell=1}^{L} \bE\left[P_{\ell}^{\cD}-P_{\ell-1}^{\cD}\right].
\]
We can use independent Monte Carlo samples to approximate each term $\bE\left[P_{\ell}^{\cD}-P_{\ell-1}^{\cD}\right]$.
The resulting multilevel estimator is given by
\begin{equation}
	\PML^{\cD}:=\frac{1}{M_{0}}\sum_{m=1}^{M_{0}}P_{0}^{\cD,(0,m)}+\sum_{\ell=1}^{L}\frac{1}{M_{\ell}}\sum_{m=1}^{M_{\ell}}\left(P_{\ell}^{\cD,(\ell,m)}-P_{\ell-1}^{\cD,(\ell,m)}\right),\label{eq:def:mlmc:estimator}
\end{equation}
where for every $\ell$, the random variables $(P_{\ell}^{\cD,(\ell,m)})_{1\leq m\leq M_{\ell}}$
and $(P_{\ell-1}^{\cD,(\ell,m)})_{1\leq m\leq M_{\ell}}$ are independent
copies of $P_{\ell}^{\cD}$ and $P_{\ell-1}^{\cD}$ respectively.
For a fixed level $\ell$, the random variables $P_{\ell}^{\cD,(\ell,m)}$ and $P_{\ell-1}^{\cD,(\ell,m)}$ are constructed using the same Gaussian samples, according to the procedure described above.
Note that $\bE\bigl[\PML^{\cD}\bigr]=\bE\left[P_{L}^{\cD}\right]$ and, by independence,
\[
\Var\Bigl(\PML^{\cD}\Bigr)=\sum_{\ell=0}^{L}\frac{V_{\ell}^{\cD}}{M_{\ell}},
\]
where 
\[
V_{\ell}^{\cD}:=\Var\left(P_{\ell}^{\cD} - P_{\ell-1}^{\cD} \right), \qquad \forall \, \ell=1,\dots,L \,,
\]
and we set $P_{-1}^{\cD} = 0$ so that $V_{0}^{\cD}=\Var\left(P_{0}^{\cD}\right)$.
We denote the cost of generating one sample of $P_{\ell}^{\cD}-P_{\ell-1}^{\cD}$
by 
\[
C_{\ell}^{\cD}:=\Cost\left(P_{\ell}^{\cD} - P_{\ell-1}^{\cD}\right), \qquad \forall \, \ell=1,\dots,L \,.
\]
As pointed out in the previous section, the cost of generating $P_{\ell}^{\cD}$ is $\cO\left(n_{\ell}^{2}\right)$ for both discretization schemes $\cD=\cR$ and $\cD=\cT$, due to the Cholesky exact simulation method we consider to sample the Gaussian vector $(X_T^{u})_{u \in \mathcal G_{n_\ell}}$, hence $C_{\ell}^{\cD} = \cO\left(n_{\ell}^{2}\right)$.
The overall cost of the multilevel estimator in \eqref{eq:def:mlmc:estimator} is $C^\cD_{\bm{M},\bm{n}} := \sum_{\ell = 1}^L M_l \, C_{\ell}^{\cD}$.

Denote $\MSE^\cD_{\bm{M},\bm{n}} := \esp \Bigl[ \Bigl(\PML^\cD - \esp[\varphi(\VIX_T)]\Bigl)^2 \Bigr]$ the  mean-squared error of the multilevel estimator; we want this MSE to be smaller than a given threshold $\ve^2$. 
The architecture of the optimal multilevel estimator  -- that is, the number $L$ of levels and the allocation of the simulation effort $M_\ell$ at each level -- can be identified by solving the constrained optimization problem
\be \label{e:optim_ML}
\min_{M_0, \dots, M_L}
C^\cD_{\bm{M},\bm{n}}
\quad \mbox{ such that } \quad
\MSE^\cD_{\bm{M},\bm{n}} \le \ve^2 \,.
\ee 

We now prove that \eqref{e:optim_ML} leads to a multilevel estimator achieving $\MSE^\cD_{\bm{M},\bm{n}} \leq\ve^{2}$ with computational complexity $\cO\big(\ln(\ve)^{2}\ve^{-2}\big)$ when $\cD$ is the right-point rectangle scheme and with the optimal complexity rate $\cO\big(\ve^{-2}\big)$ when $\cD$ is the trapezoidal scheme. 

\begin{thm}
	\label{thm:mlmc:discetization:integral}Suppose that the payoff function
	$\varphi$ is Lipschitz, and consider $n_{0}\in\bN^{*}$. 
	Then, for every tolerance $\ve>0$ there exist an initial number of samples $M_{0,\cR}$ and a number of levels $L_{\cR}$
	(resp.\ $M_{0,\cT}$ and $L_{\cT}$)
	such that the rectangle (resp.\ trapezoidal) multilevel estimator $\PML^{\cR}$ (resp.\ $\PML^{\cT}$),
	defined in \eqref{eq:def:mlmc:estimator}, has a mean-squared error satisfying $\MSE\leq\ve^{2}$ and a computational complexity $\cO\big(\ln\left(\ve\right)^{2}\ve^{-2}\big)$
	(resp.\ $\cO\left(\ve^{-2}\right)$), obtained setting
	\be \label{e:opt_ML_rect}
	n_{\ell}=n_{0}2^{\ell},
	\qquad M_{\ell,\cR} = M_{0,\cR} \, 2^{-2\ell},
	\qquad \forall \, \ell = 0, \dots, L_{\cR},
	\ee
	for the rectangle scheme, and respectively 
	\be \label{e:opt_ML_trap}
	n_{\ell}=n_{0}2^{\ell},
	\qquad M_{\ell,\cT} = M_{0,\cT} \, 2^{-(2+H)\ell},
	\qquad \forall \, \ell = 0, \dots,  L_{\cT},
	\ee
	for the trapezoidal scheme. 
\end{thm}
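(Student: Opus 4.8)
The plan is to follow the complexity analysis of Giles \cite{giles_2008}, specialised to the error and cost profile established above, and to check that the constrained problem \eqref{e:optim_ML} produces exactly the geometric allocations \eqref{e:opt_ML_rect}--\eqref{e:opt_ML_trap}. Since $\bE\bigl[\PML^\cD\bigr]=\bE\left[P_L^\cD\right]$, the mean-squared error splits as
\[
\MSE^\cD_{\bm{M},\bm{n}} = \bigl(\bE[\varphi(\VIX_T^2)]-\bE[P_L^\cD]\bigr)^2 + \sum_{\ell=0}^L \frac{V_\ell^\cD}{M_\ell},
\]
and I would first control the squared bias. The bias equals $\bigl|\bE[\varphi(\VIX_T^2)-\varphi(\VIX_T^{2,\cD_{n_L}})]\bigr|$, which by the weak-error bounds \eqref{eq:weak:error:rectangle}--\eqref{eq:weak:error:trapeze} is $\cO(2^{-L})$ for the rectangle scheme and $\cO(2^{-(1+H)L})$ for the trapezoidal scheme. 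Imposing that this be at most $\ve/\sqrt2$ fixes the number of levels: $L_\cR \ge \log_2(1/\ve)+\cO(1)$ and $L_\cT \ge \frac1{1+H}\log_2(1/\ve)+\cO(1)$, so in both cases $L_\cD = \cO(|\log\ve|)$.

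Next I would estimate the level variances through the Lipschitz property and the $L^2$ strong error of Proposition \ref{prop:strong:Lp:error}. For $\ell \ge 1$ one has $V_\ell^\cD \le \bE\bigl[(P_\ell^\cD-P_{\ell-1}^\cD)^2\bigr] \le \Lip^2\,\bE\bigl[|\VIX_T^{2,\cD_{n_\ell}}-\VIX_T^{2,\cD_{n_{\ell-1}}}|^2\bigr]$, and inserting $\VIX_T^2$ with the $L^2$ triangle inequality gives $\|\VIX_T^{2,\cD_{n_\ell}}-\VIX_T^{2,\cD_{n_{\ell-1}}}\|_{L^2} \le \|\VIX_T^{2,\cD_{n_\ell}}-\VIX_T^2\|_{L^2}+\|\VIX_T^2-\VIX_T^{2,\cD_{n_{\ell-1}}}\|_{L^2}$. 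By \eqref{eq:estimate:lp:error:rectangle} with $p=2$ the right-hand side is $\cO(2^{-\ell})$, so $V_\ell^\cR=\cO(2^{-2\ell})$; by \eqref{eq:estimate:lp:error:trapezoidal} it is $\cO(2^{-(1+H)\ell})$, so $V_\ell^\cT=\cO(2^{-(2+2H)\ell})$. With $V_0^\cD=\cO(1)$ this identifies the variance-decay rates $\beta_\cR=2$ and $\beta_\cT=2+2H$, to be compared with the common cost rate $C_\ell^\cD=\cO(n_\ell^2)=\cO(2^{2\ell})$, i.e.\ $\gamma=2$.

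Then, for $L$ fixed as above, I would solve the allocation problem \eqref{e:optim_ML}: minimise $\sum_\ell M_\ell C_\ell^\cD$ subject to $\sum_\ell V_\ell^\cD/M_\ell \le \ve^2/2$. A Lagrange-multiplier (equivalently Cauchy--Schwarz) argument yields the optimal weights $M_\ell \propto \sqrt{V_\ell^\cD/C_\ell^\cD}$ and a minimal cost of order $\ve^{-2}\bigl(\sum_{\ell=0}^L \sqrt{V_\ell^\cD C_\ell^\cD}\bigr)^2$. Since $\sqrt{V_\ell^\cR/C_\ell^\cR}=\cO(2^{-2\ell})$ and $\sqrt{V_\ell^\cT/C_\ell^\cT}=\cO(2^{-(2+H)\ell})$, this reproduces the announced allocations \eqref{e:opt_ML_rect}--\eqref{e:opt_ML_trap} after rounding to integers and absorbing constants into $M_{0,\cR}$ and $M_{0,\cT}$; it then remains to read off the two complexities from $\sum_\ell \sqrt{V_\ell^\cD C_\ell^\cD}$.

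The only genuinely delicate point, and the one I expect to be the crux, is the contrast between the two schemes at this last step. For the trapezoidal scheme $\beta_\cT=2+2H>\gamma=2$, so $\sqrt{V_\ell^\cT C_\ell^\cT}=\cO(2^{-H\ell})$ is summable, $\sum_\ell \sqrt{V_\ell^\cT C_\ell^\cT}=\cO(1)$, and the total cost is the optimal $\cO(\ve^{-2})$, dominated by the coarsest levels. For the rectangle scheme $\beta_\cR=\gamma=2$ is precisely the boundary case: $\sqrt{V_\ell^\cR C_\ell^\cR}=\cO(1)$ is constant in $\ell$, the sum grows linearly as $\cO(L)$, and together with $L_\cR=\cO(|\log\ve|)$ this produces the extra logarithmic factor $\Cost=\cO(L^2\ve^{-2})=\cO(\log^2(\ve)\,\ve^{-2})$. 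Equivalently, with $M_{\ell,\cR}=M_{0,\cR}2^{-2\ell}$ both the per-level variance contributions $V_\ell^\cR/M_{\ell,\cR}$ and the per-level costs $M_{\ell,\cR}C_\ell^\cR$ are constant in $\ell$, which forces $M_{0,\cR}=\cO(L\ve^{-2})$ to meet the variance budget and hence $\sum_\ell M_{\ell,\cR}C_\ell^\cR=\cO(L\,M_{0,\cR})=\cO(L^2\ve^{-2})$. Everything outside this boundary bookkeeping reduces to the weak- and strong-error estimates already in hand and to elementary geometric sums.
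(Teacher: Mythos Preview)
Your proposal is correct and follows essentially the same approach as the paper: identify the weak-error rate $\alpha$, the variance-decay rate $\beta$ (via the Lipschitz property and the $L^2$ strong error of Proposition~\ref{prop:strong:Lp:error}), and the cost rate $\gamma=2$, then invoke Giles' complexity theorem in the boundary case $\beta_\cR=\gamma$ for the rectangle scheme and the case $\beta_\cT>\gamma$ for the trapezoidal scheme. The only cosmetic difference is that you spell out the Lagrange/Cauchy--Schwarz optimisation and the resulting sums $\sum_\ell\sqrt{V_\ell C_\ell}$ explicitly, whereas the paper cites \cite[Theorem~1]{giles_2015} directly after verifying the rate exponents.
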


\begin{rem}[The parameters $M_{0,\cR}$ and $L_{\cR}$]
	\label{rem:multilevel:free:parameters}
	Using \eqref{e:opt_ML_rect} and \eqref{e:opt_ML_trap}, the multilevel estimators in Theorem \ref{thm:mlmc:discetization:integral} are fully specified once the number of levels $L$ and the initial number of samples $M_0$ have been fixed.
	We can choose $L_\cR = \cO(\ln(1/\ve))$ and $M_{0,\cR} = \cO(\ve^{-2}\ln(1/\ve))$ for the rectangle scheme, and $L_\cT = \cO(1)$ and $M_{0,\cT} = \cO(\ve^{-2})$ for the trapezoidal scheme.
	In the former case, we can precisely set 
	$L_{\cR}=\left\lceil \frac{\ln\left(\sqrt{2} \, c_{1}\ve^{-1}\right)}{\ln\left(2\right)}\right\rceil$ and
	$M_{0,\cR}=\left\lceil \frac{2}{\ve^{2}} \, c_{2}\left(L_{\cR}+1\right)\right\rceil$,
	where $c_{1}=\Lip \, \Lambda \, n_{0}^{-1}$,
	$c_{2}=10 \, \Lip^{2}\Lambda^{2} \, n_{0}^{-2}$, $\Lip$ is the Lipschitz constant of the payoff function, and the constant $\Lambda = \Lambda\left(X_{0},T,\Delta,H\right)$ is defined in Theorem \ref{thm:strong:L2:error}.
	Using the exact asymptotic of the $L^{2}$ strong error in Theorem \ref{thm:strong:L2:error}, we can see that the associated multilevel estimator $\PML ^{\cR}$ satisfies the requirement $\MSE\leq\ve^{2}$ with a computational complexity of order $\cO\bigl(\ln\left(\ve\right)^{2}\ve^{-2}\bigr)$.
	We will use the specifications above of the parameters $L_\cR$ and $M_{0, \cR}$ in our numerical tests in section \ref{s:numerics_ML}.
\end{rem}

\begin{proof}[Proof of Theorem \ref{thm:mlmc:discetization:integral}]
	We start with the rectangle-based multilevel estimator $\PML^{\cR}$.
	We borrow the mathematical notations from \citep[Theorem 1]{giles_2015}.
	As discussed above,
	the computational cost $C_{l}^{\cR}$ for each level $\ell$ is of order $\cO\left(n_{\ell}^{2}\right)$ and therefore, with reference to the notation of  \citep[Theorem 1]{giles_2015}, we have $\gamma=2$.
	We have to estimate the variance $V_{\ell}^{\cR}$ at level $\ell$.   
	Since $\varphi$ is Lipschitz (with Lipschitz constant $\Lip$), from the strong error estimate \eqref{eq:L2:strong:error:rectangle} it follows that
	\begin{align*}
		V_{\ell}^{\cR} \leq\bE\left[\left(P_{\ell}-P_{\ell-1}\right)^{2}\right]
		& =\bE\left[\left(\varphi\left(\VIX_{T}^{2,\cR_{\ell}}\right)-\varphi\left(\VIX_{T}^{2}\right)+\varphi\left(\VIX_{T}^{2}\right)-\varphi\left(\VIX_{T}^{2,\cR_{\ell-1}}\right)\right)^{2}\right]
		\\
		& \leq2 \, \Lip^{2} \, \bE\left[\left(\VIX_{T}^{2}-\VIX_{T}^{2,\cR_{\ell}}\right)^{2}+\left(\VIX_{T}^{2}-\VIX_{T}^{2,\cR_{\ell-1}}\right)^{2}\right]
		\\
		& \leq2 \, \Lip^{2} \, \Lambda\left(X_{0},T,\Delta,H\right)^{2} (1 + o(1))
		\left(n_{\ell}^{-2}+n_{\ell-1}^{-2}\right)
		\le c_{2} \, 2^{-\beta\ell}
	\end{align*}
	for $n_\ell$ large enough, where we have set $c_{2} = 10\, \Lip^{2} \, \Lambda\left(X_{0},T,\Delta,H\right)^{2}n_{0}^{-2}$ and $\beta=2$.
	We then know how to set the optimal number of samples $M_{\ell,\cR}$
	at level $\ell$: we set $M_{\ell,\cR}=M_{0,\cR} \, 2^{-\frac{\left(\beta+\gamma\right)}{2}\ell} = \, M_{0,\cR}2^{-2\ell}$.
	The value of $M_{0,\cR}$ has to fixed in such a way that $\Var\bigl(\PML^{\cR}\bigr)\leq\frac{\ve^{2}}{2}$, that is
	\begin{align*}
		\Var\bigl(\PML^{\cR}\bigr)
		& =\sum_{\ell=0}^{L_{\cR}}\frac{V_{\ell}^{\cR}}{M_{\ell,\cR}}\leq\frac{c_{2}}{M_{0,\cR}}\sum_{\ell=0}^{L_{\cR}}2^{-2\ell+2\ell}=\frac{c_{2}\left(L_{\cR}+1\right)}{M_{0,\cR}},
	\end{align*}
	which leads to $M_{0,\cR}=\left\lceil \frac{2}{\ve^{2}}c_{2}\left(L_{\cR}+1\right)\right\rceil .$
	Now,  it follows from Theorem \ref{thm:strong:L2:error} and the Cauchy--Schwarz inequality that
	\[
	\left|\bE\left[\varphi\left(\VIX_{T}^{2,\cR_{\ell}}\right)-\varphi\left(\VIX_{T}^{2}\right)\right]\right|\leq\bE\left[\left|\varphi\left(\VIX_{T}^{2,\cR_{\ell}}\right)-\varphi\left(\VIX_{T}^{2}\right)\right|^{2}\right]^{\frac{1}{2}}
	\le c_{1} 2^{-\alpha\ell}
	= c_{1} 2^{-\alpha\ell},
	\]
	where we have set $c_{1}=\Lip \, \Lambda\left(X_{0},T,\Delta,H\right)n_{0}^{-1}$
	and $\alpha=1$. 
	The bias of the multilevel estimator therefore satisfies 
	\[
	\left|\esp\left[\PML^{\cR}  - \varphi\left(\VIX_{T}^{2}\right) \right] \right|
	= 
	\left|\bE\left[\varphi\left(\VIX_{T}^{2,\cR_{L_{\cR}}}\right) - \varphi\left(\VIX_{T}^{2}\right)\right]\right|\leq c_{1}2^{-L_{\cR}} \,.
	\]
	Imposing the squared bias to be smaller than $\frac{\ve^{2}}{2}$, that is
	\[
	c_{1}^{2} \, 2^{-2L_{\cR}}\leq\frac{\ve^{2}}{2}
	\iff
	L_{\cR}\geq\frac{\ln\left(\sqrt{2} \, c_{1}\, \ve^{-1}\right)}{\ln\left(2\right)},
	\]
	we set $L_{\cR}=\left\lceil \frac{\ln\left(\sqrt{2} \, c_{1} \, \ve^{-1}\right)}{\ln\left(2\right)}\right\rceil$.
	Overall, we are in the case $\alpha\geq\frac{1}{2}\min\left(\gamma,\beta\right)$ and $\gamma=\beta$ of \citep[Theorem 1]{giles_2015}, which entails that the cost of the multilevel estimator $\PML^{\cR}$ is $\cO(\ln(\ve)^{2}\ve^{-2})$.
	
	For $\PML^{\cT}$, we proceed similarly. The required cost $C_{l}^{\cT}$
	at each level is still $\cO(n_{\ell}^{2})$, therefore we have $\gamma=2$.
	The case $p=2$ in Proposition \ref{prop:strong:Lp:error} implies that there exists a constant $c$ independent of $\ell$ such that 
	\begin{align*}
		V_{\ell}^{\cT} & \leq c \, 2^{-2(1+H)\ell} 
	\end{align*}
	for all $\ell \ge 1$.
	Consequently, we have $\beta=2(1+H)$ and the optimal number of samples $M_{\ell,\cT}$ at level $\ell$ is now of the form $M_{\ell,\cT} = M_{0,\cT} \, 2^{-\frac{\left(\beta+\gamma\right)}{2}\ell} = M_{0,\cT} \, 2^{-(2+H)\ell}$.
	Using again Proposition \ref{prop:strong:Lp:error} to estimate the bias  $\bE\bigl[\varphi\bigl(\VIX_{T}^{2,\cR_{L_{\cR}}}\bigr) - \varphi(\VIX_{T}^{2})\bigr]$, we obtain $\alpha=1+H$. 
	Therefore, we are in the case $\alpha\geq\frac{1}{2}\min\left(\gamma,\beta\right)$ and  $\gamma<\beta$ of \citep[Theorem 1]{giles_2015}, which entails that the complexity of $\PML^{\cT}$ is $\cO\left(\ve^{-2}\right)$. 
\end{proof}

\begin{table}[H] 
	\begin{centering}
		\begin{tabular}{|c|c|c|}
			\hline 
			Scheme  & Standard MC  & Multilevel MC\tabularnewline
			\hline 
			Rectangle  & $\cO(\ve^{-4})$  & $\cO(\ln^{2}(\ve)\ve^{-2})$\tabularnewline
			\hline 
			Trapezoidal  & $\cO\Bigl(\ve^{-2\bigl(1+\frac{1}{1+H}\bigr)}\Bigr)$  & $\cO(\ve^{-2})$\tabularnewline
			\hline 
		\end{tabular}
		\par\end{centering}
	\caption{\label{table:asymptotic_costs}
	Summary of the different computational costs for the rectangle and
		trapezoidal schemes combined with standard and multilevel Monte Carlo. The target  $\MSE$ is $\cO(\ve^{2})$.} 
\end{table}

\subsection{Numerical results} \label{s:numerics_ML}

Recall that the cost required to achieve $\MSE=\cO(\ve^{2})$ with the plain MC estimator \eqref{e:MC} is given by $\CostMC^{\cR}=\cO(\ve^{-4})$, while with the MLMC estimator \eqref{eq:def:mlmc:estimator} we have $\CostML^{\cR}=\cO\bigl(\ln(\ve)^{2}\ve^{-2}\bigr)$ when using the rectangle scheme $\cD=\cR$, and $\CostML^{\cT}=\cO(\ve^{-2})$ when using the trapezoidal scheme $\cD=\cT$.
Consequently, we expect to observe a linear dependence of the log of the quadratic error $\ln\left(\MSE\right)$ with respect to the log of the cost $\ln\left(\Cost\right)$, with a different slope  for each estimator:  we expect slope close to $-\frac 12$ for a regression of $\ln\left(\MSE\right)$ against $\ln\left(\CostMC^\cR\right)$, slope close to $-1$ for $\ln\left(\MSE\right)$ against $\ln\left(\CostML^{\cT}\right)$, and slope still close to $-1$ for $\ln\left(\MSE\right)$ against $\ln \left(\CostML^{\cR}\right)$; in the latter case, the slope will be slightly affected by the logarithmic term $\cO\bigl(\ln(\ve)^{2})$.

To compare the four different estimators, namely the plain MC and the MLMC based on the rectangle and the trapezoidal schemes,
we price two at-the-money VIX call options with parameters
$T=0.5, \, H=0.1, \, \eta=0.5, \, X_{0}=\ln\left(0.235^{2}\right)$, $\Delta \in \{ \frac{1}{12}, 1 \}$. When $\Delta = \frac{1}{12}$, 
this corresponds to a $\VIX$ call option, while when $\Delta = 1$, this corresponds to a call option on the forward variance $V_T^{T, T + 1}$ overlooking a one--year time window.
For each call option, we display the behavior of the resulting MSE against the computational cost in log-log plot, and the estimated prices against the computational cost (Figures \ref{fig:1M_mse_price_no_cv} and \ref{fig:12M_mse_price_no_cv}).
The reference prices
are computed with an intensive MC simulation
over $500$ discretization points, $2 \times 10^{7}$ i.i.d.\ samples and taking advantage of the control variate presented in section \ref{subsec:control:variate}.
For $\Delta = \frac{1}{12}$, the reference VIX future price (which sets the option strike) is $0.2218135 \pm 3 \times10^{-7}$ and the reference call option price is $0.0298840 \pm 3 \times10^{-7}$.
When $\Delta = 1$, the future price equals $0.2295989 \pm 4 \times10^{-7}$ and the call option price is $0.0196369 \pm 4 \times10^{-7}$.

For given values of $\ve$, we implement the Monte Carlo estimator $\widehat{P}_{M}^{\cR_{n}}$ in \eqref{e:MC} with $M=\left\lceil \ve^{-2}\right\rceil $ and $n=\left\lceil \ve^{-1}\right\rceil $. 
For the multilevel estimator $\PML^{\cR}$,
we set $n_{0}=6$ and, following Remark \ref{rem:multilevel:free:parameters}, we take
$L_{\cR}=\left\lceil \frac{\ln\left(\sqrt{2} \, c_{1}  \,\ve^{-1}\right)}{\ln\left(2\right)}\right\rceil$ and $M_{0,\cR}=\left\lceil \frac{2}{\ve^{2}}  \, c_{2}\left(L_{\cR}+1\right)\right\rceil$, where, according to Remark \ref{rem:lipschitz:payoff}, $\mathrm{L}_{\varphi}=\frac{1}{2\kappa}$. 
Recall that the other constants are $c_{1}=\Lip\Lambda\left(X_{0},T,\Delta,H\right)n_{0}^{-1}$ and $c_{2} = 10 \, \Lip^{2}\Lambda\left(X_{0},T,\Delta,H\right)^{2}n_{0}^{-2}$.
For the multilevel estimator $\PML^{\cT}$ with the trapezoidal scheme, we use the same parameters $M_{0,\cT}=M_{0,\cR}$, $L_{\cT}=L_{\cR}$, and we set $n_{0}=3$.
The simulation effort $M_\ell$ at each level is then fixed according to Theorem \ref{thm:mlmc:discetization:integral}. 

We estimate the $\MSE$ for each method as $\frac{1}{N_{\MSE}}\sum_{j=1}^{N_{\MSE}}\left(\hat{p}_{j}-p\right)^2$, where $p$ is the reference price and $\left(\hat{p}_{j}\right)_{1\leq j\leq N_{\MSE}}$ are $N_{\MSE}$ independent copies of either the multilevel or the plain MC estimators, along with their $95\%$-confidence interval (materialized by the vertical error bars 
in Figures \ref{fig:1M_mse_price_no_cv} and \ref{fig:12M_mse_price_no_cv}).
We take $N_{\MSE}=10^3$.
The cost of each estimator is evaluated via the identities $\CostMC^\cR=n^{2}M$,
$\CostML^{\cR} = \sum_{\ell=0}^{L_{\cR}}n_{\ell}^{2} \, M_{\ell,\cR} = n_{0}^{2} \, M_{0,\cR}\left(L_{\cR}+1\right)$,
and $\CostML^{\cT} = n_{0}^{2} \, M_{0,\cT}\sum_{\ell=0}^{L_{\cT}}2^{-H\ell} = n_{0}^{2} \, M_{0,\cT} \, \frac{1-2^{-H(L_{\cT}+1)}}{1-2^{-H}}.$

\begin{figure}[t]
	\begin{centering}
		\includegraphics[width=7.9cm]{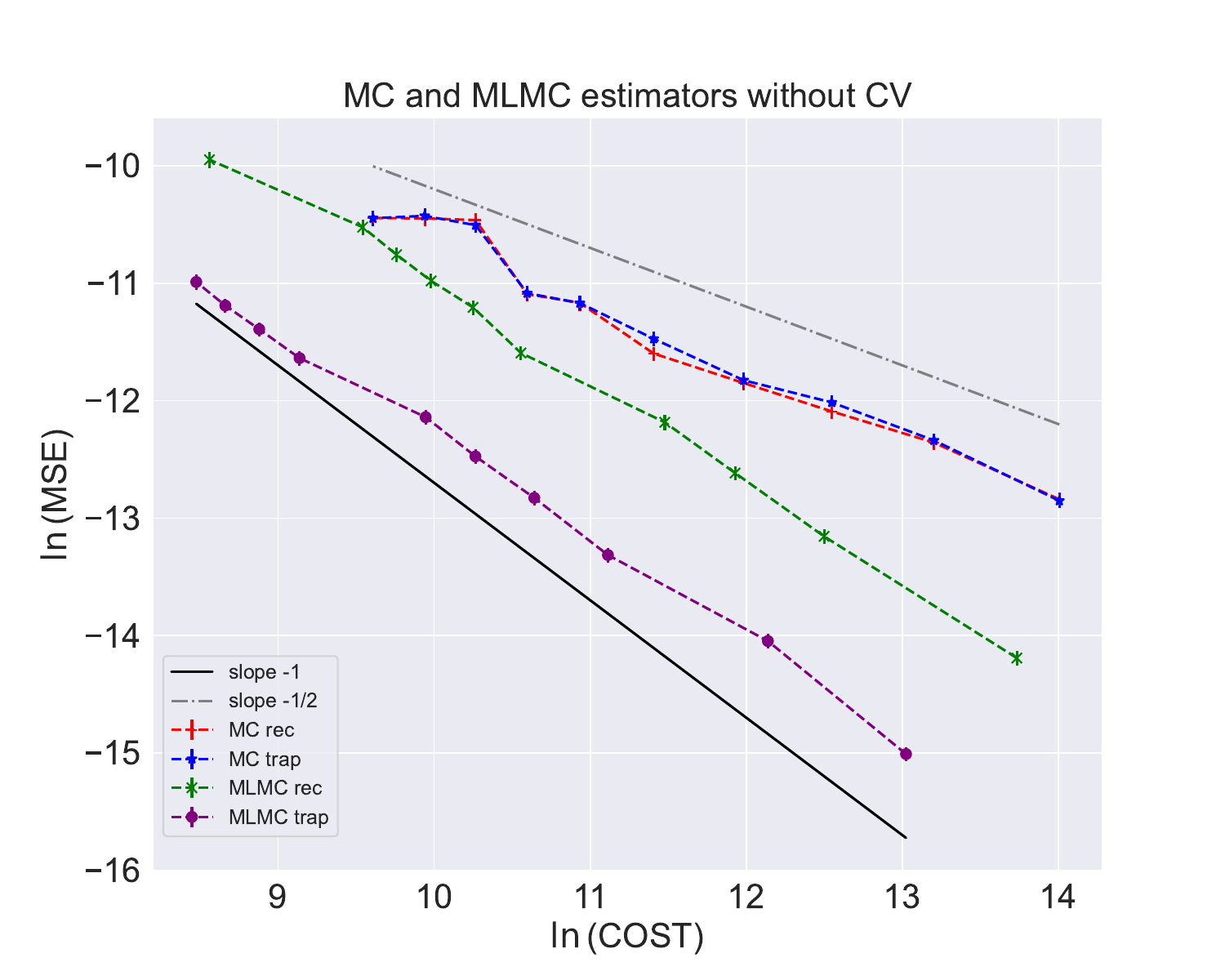}
		\includegraphics[width=7.9cm]{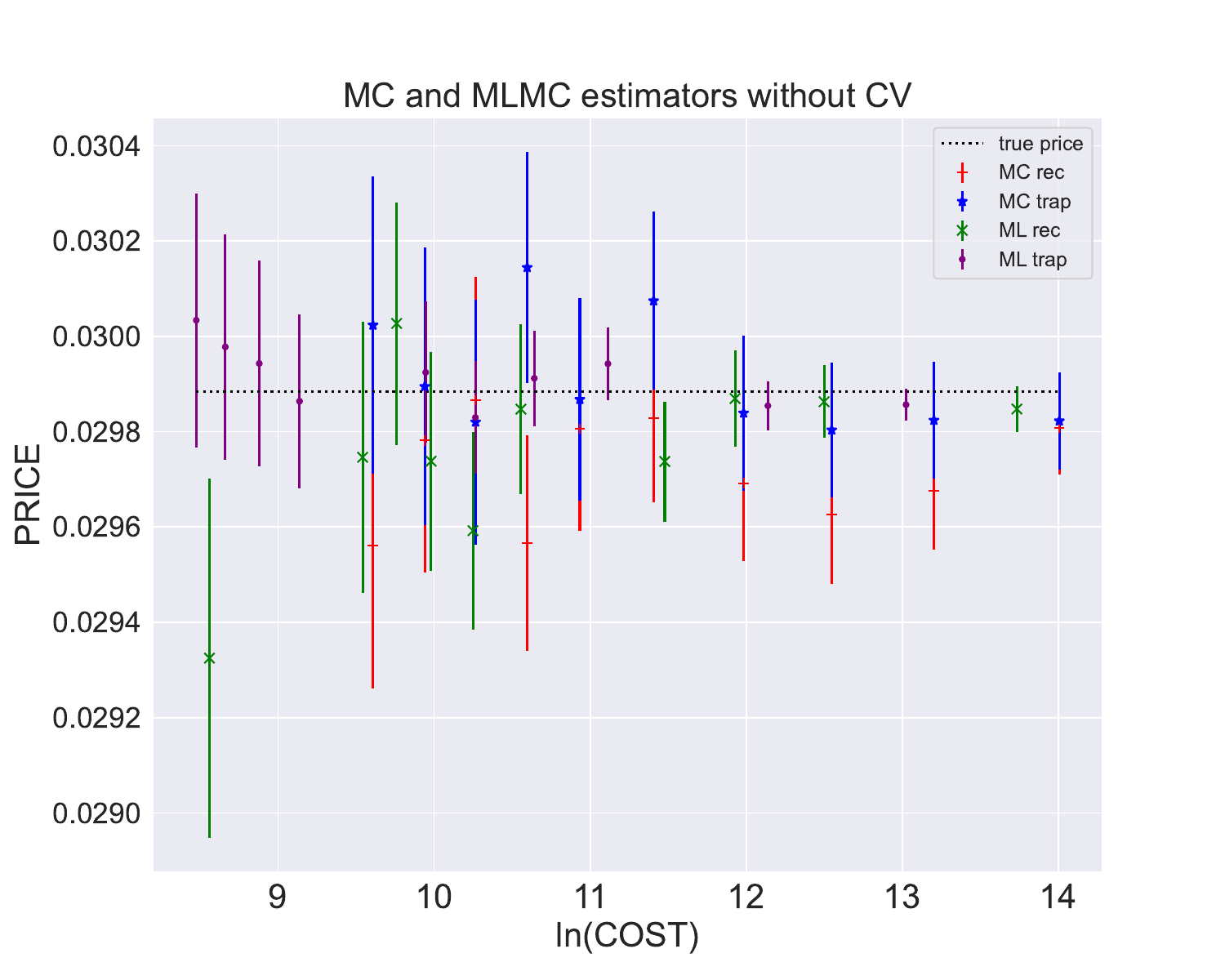} 
		\par\end{centering}
	\caption{
	Pricing of a $\VIX$ call option, $\Delta = \frac{1}{12}$.
		\label{fig:1M_mse_price_no_cv}
		\emph{Left}: log of the mean squared error $\ln\left(\MSE\right)$ against the log of the computational cost $\ln\left(\Cost\right)$ for the MC \eqref{e:MC} and the MLMC \eqref{eq:def:mlmc:estimator} estimators based on the rectangle and trapezoidal schemes, without control variate. \emph{Right}: the associated VIX call option prices against $\ln\left(\Cost\right)$; we display the empirical means of the estimators over the $N_{\MSE}$ independent runs, along with their $95\%$ confidence interval.
	}
\end{figure}

\begin{figure}[t]
	\begin{centering}
		\includegraphics[width=7.9cm]{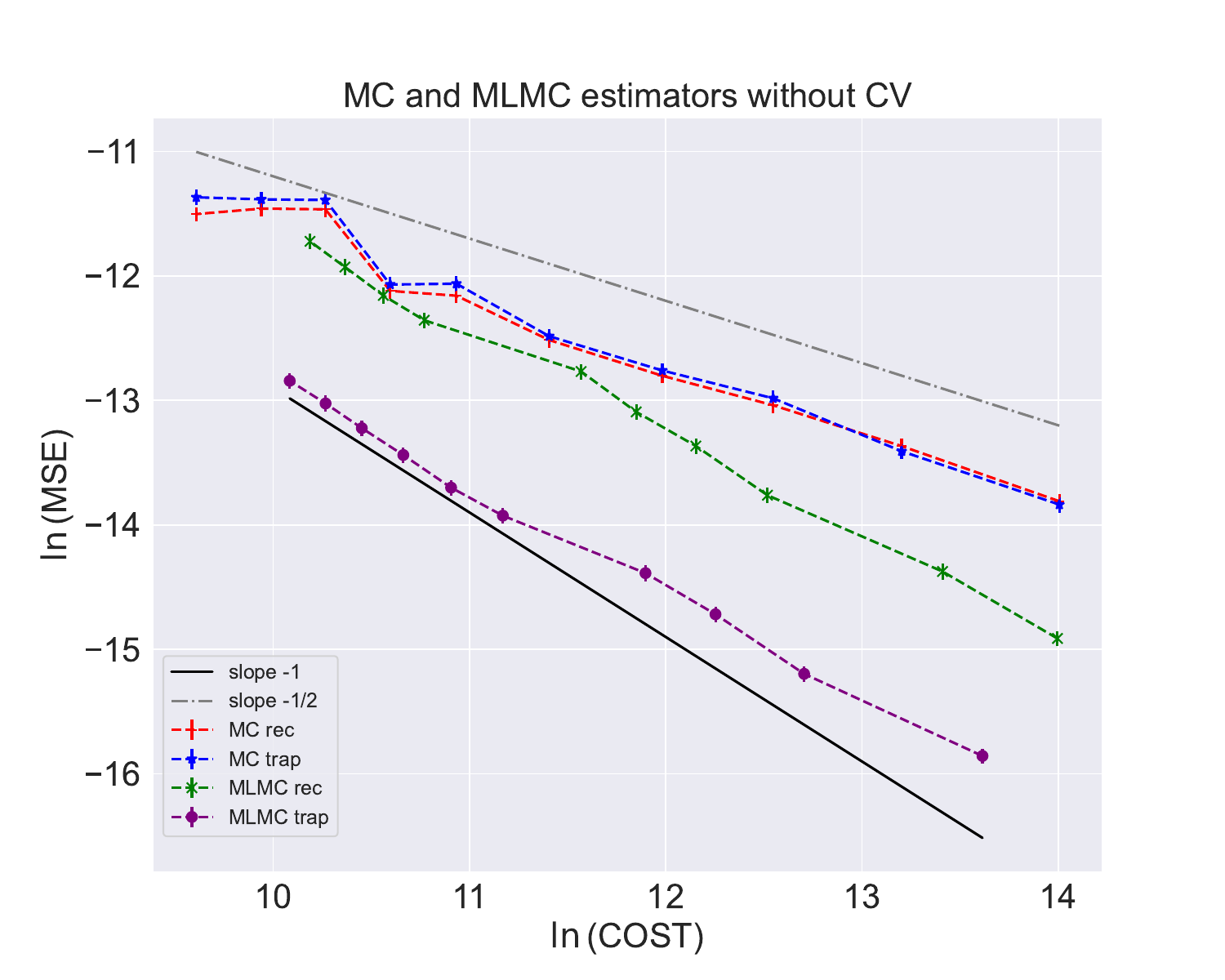}
		\includegraphics[width=7.9cm]{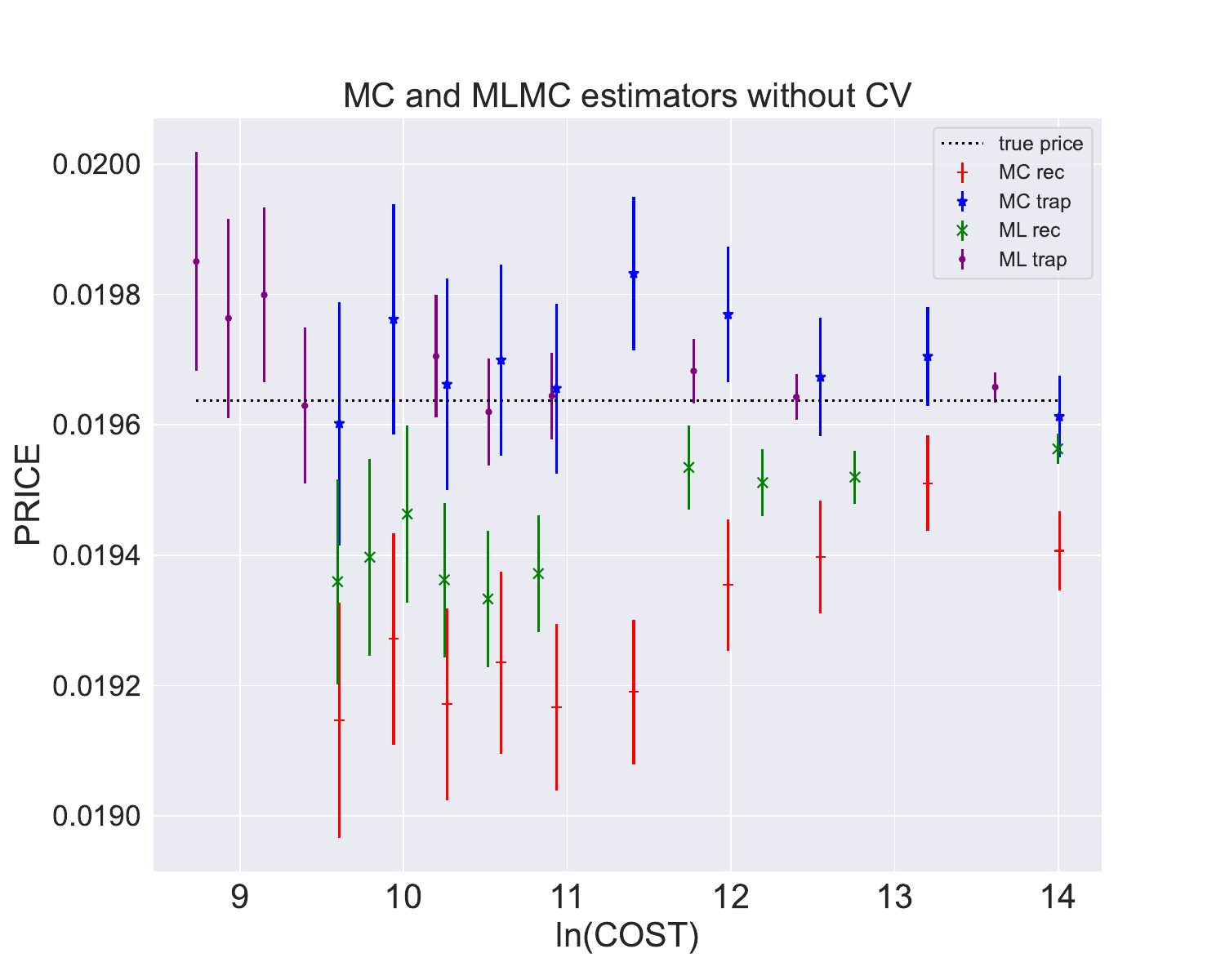} 
		\par\end{centering}
	\caption{
		\label{fig:12M_mse_price_no_cv}
		Pricing of a call option on 1Y forward variance, $\Delta = 1$.
		\emph{Left}: log of the mean squared error $\ln\left(\MSE\right)$ against 
		the log of the computational cost $\ln\left(\Cost\right)$ for the MC \eqref{e:MC} and the 
		MLMC \eqref{eq:def:mlmc:estimator} 
		estimators based on the rectangle and trapezoidal schemes,
		without control variate. \emph{Right}:	the associated call option prices against $\ln\left(\Cost\right)$; 
		we display the empirical means of the estimators over the $N_{\MSE}$ independent runs, along with their $95\%$ confidence interval.
	}
\end{figure}

In the left graphs of Figures \ref{fig:1M_mse_price_no_cv} and \ref{fig:12M_mse_price_no_cv}, we retrieve
the expected asymptotic slopes.
From our theoretical analysis, see Table \ref{table:asymptotic_costs}, we expect a slope close to $-\frac 12$
for the plain MC estimator with the rectangle scheme, and a slope close to $-\frac{1+H}{2+H}$ for the MC estimator with the trapezoidal scheme. Since $\frac{1+H}{2+H} = \frac 12 + \frac H 4 + \cO(H^2)$, for small values of $H$ the two slopes are very close, in line with what is observed in Figures \ref{fig:1M_mse_price_no_cv} and \ref{fig:12M_mse_price_no_cv} (recall we chose $H=0.1$). For the multilevel method, we observe the expected slope close to $-1$ for both estimators (actually slightly less than $-1$ for the multilevel method with a rectangle scheme, due to the logarithmic term in $\cO(\ve^{-2} \ln^2(\ve))$).
One can also see that, for a given cost, the estimated $\MSE$ is smaller for the multilevel estimators than  for the plain MC estimators.

In the right graphs of Figures \ref{fig:1M_mse_price_no_cv} and \ref{fig:12M_mse_price_no_cv}, 
we display the empirical mean $\frac 1 {N_{\MSE}} \sum_{j=1}^{N_{\MSE}} \hat{p}_{j}$ of each estimator over the  $N_{\MSE}$ independent runs, along with the $95\%$-confidence interval for the empirical mean. As expected, we see that the MLMC estimator with a trapezoidal scheme performs best, that is, 
it has the smallest bias and variance compared to the other three estimators. The bias is smaller for the MC estimator with a trapezoidal scheme compared to that with a rectangle scheme.
For large costs, while the bias are close, the variance of the MLMC estimator with a rectangle scheme is smaller than that of the two MC estimators; 
the resulting MSE for the MLMC estimator is thus smaller than in the pure MC case (as seen in the left graphs of Figures \ref{fig:1M_mse_price_no_cv} and \ref{fig:12M_mse_price_no_cv}).

\subsubsection{Additional contribution of the control variate}

In practice, options on forward variance can be priced exploiting the supplementary contribution of the control variate 
presented in section \ref{subsec:control:variate}.
In order to assess numerically the additional error reduction, we price the two at-the-money call options considered in the previous section combining the MC and MLMC estimators 
with the control variate.
The controlled MC estimator is given by \eqref{e:option_price_with_cv} with $\varphi\left(x\right)=\left(\sqrt{x}-\kappa\right)_{+}$.
In order to build enhanced MLMC estimators, we inject the control variates \eqref{e:option_price_with_cv} and \eqref{e:CV_trap} at each level $0 \leq \ell \leq L$ of the corresponding multilevel scheme \eqref{eq:def:mlmc:estimator}.
The results are presented in Figure \ref{fig:cv}.
First, comparing with the right graphs of Figures \ref{fig:1M_mse_price_no_cv} and \ref{fig:12M_mse_price_no_cv}, we observe that the control variate significantly reduces the variance of each estimator, as expected, so that the bias can be identified more clearly.
Second, one can see that the MLMC estimator with a trapezoidal scheme still provides the best results, while the MC estimator with a rectangle scheme is still the worst.

\begin{figure}[t]
		\begin{centering}
			\includegraphics[width=7.7cm]{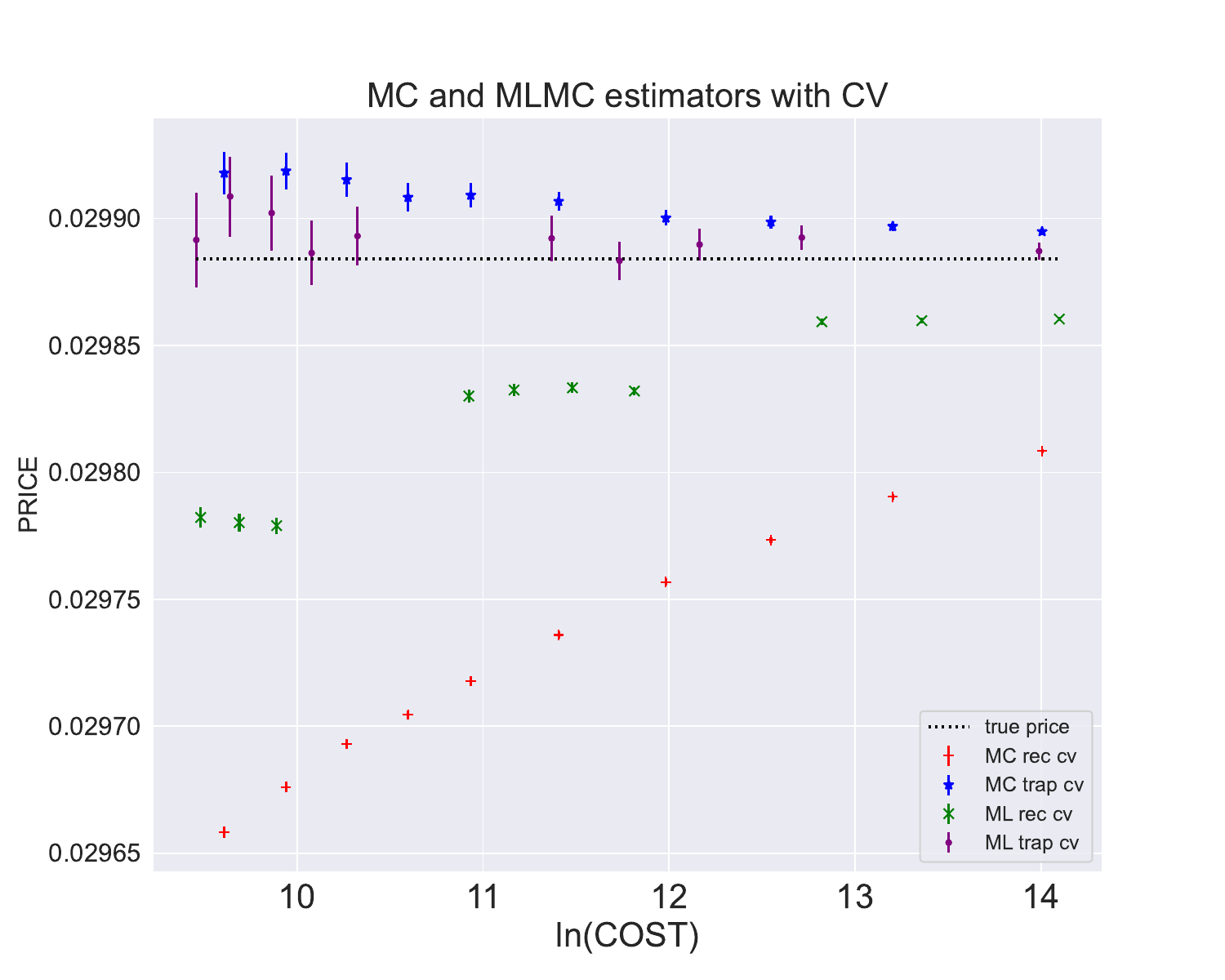}
			\includegraphics[width=7.7cm]{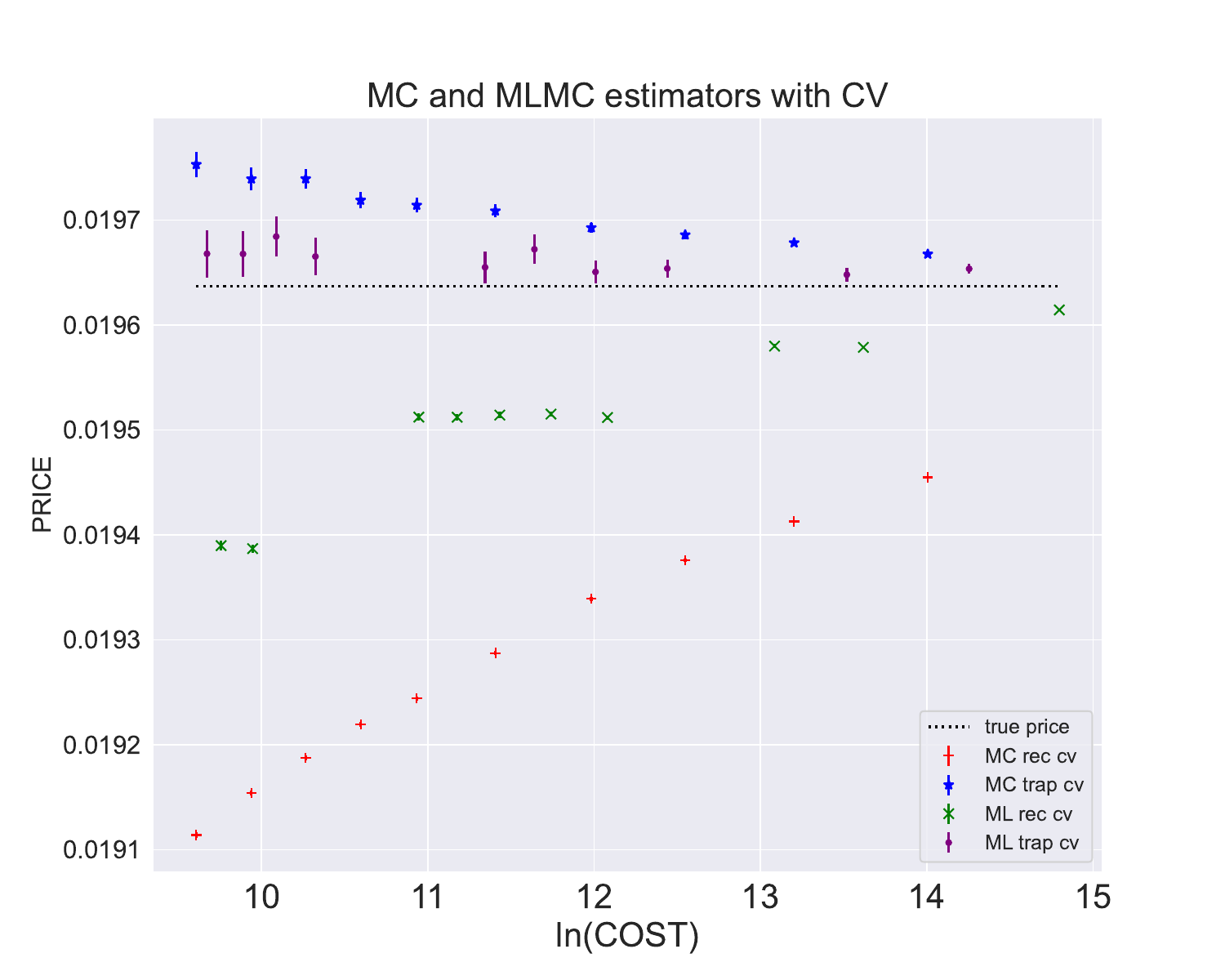} 
			\par\end{centering}
		\caption{
			\label{fig:cv}
			Prices of call options on forward variance against $\ln\left(\Cost\right)$ for $\Delta = \frac{1}{12}$ (VIX case, left figure) and for $\Delta = 1$ (right figure) for the MC and MLMC estimators based on the different discretization scheme and enhanced with the control variates presented in section \ref{subsec:control:variate}.
}
\end{figure}

\section{Conclusion}

We have presented two multilevel Monte Carlo estimators for the evaluation of $\VIX$ options in rough forward variance models, based on a rectangle and a trapezoidal discretization scheme for the integrated forward variance defining the $\VIX$ random variable.
We have shown that the trapezoidal scheme combined with the multilevel method is able to achieve the optimal complexity $\cO(\ve^{-2})$ of an unbiased estimator with mean-squared error of size $\ve^2$.

\section{Appendix} \label{s:appendix}

\begin{proof}[Proof of Proposition \ref{prop:strong:Lp:error}]
	We follow the steps of the proof of \citep[Proposition 2]{horvath2018volatility}.
	We denote $C$ a constant that may vary from line to line and may depend on the model parameters, on $T$ and $\Delta$ and on the exponent $p$, but not on $n$.
	Let us consider the rectangle scheme. 
	Without loss of generality, assume $p\geq1$, and define 
	\[
	Z_{u} = -\frac{1}{2}\int_{0}^{T}K(u,s)^{2}\dd s+\int_{0}^{T}K(u,s)\dd W_{s},
	\qquad
	\eta(u)=
	\min\left\{ u_{i}: u_{i} \ge u\right\} 
	\]
	Applying Minkowski inequality, one has
	\begin{align*}
		\left\Vert \VIX_{T}^{2}-\Rn\right\Vert _{p}
		& =
		\frac{1}{\Delta}\left\Vert \int_{T}^{T+\Delta}e^{X_{0}^{u}}\left(e^{Z_{u}}-e^{Z_{\eta\left(u\right)}}\right)\dd u\right\Vert _{p}
		\\
		& \leq\frac{1}{\Delta}\int_{T}^{T+\Delta}e^{X_{0}^{u}}\left\Vert e^{Z_{u}}-e^{Z_{\eta(u)}}\right\Vert _{p}\dd u
		=  \frac{1}{\Delta} \sum_{i=1}^n \int_{u_{i-1}}^{u_i} e^{X_{0}^{u}} \left\Vert e^{Z_{u}} - e^{Z_{u_i}}\right\Vert _{p}\dd u \,.
	\end{align*}
	For every  $u\in(u_{i-1},u_{i})$ there exists $\theta\in[0,1]$ (generally random)
	such that 
	\begin{align*}
		\left\Vert e^{Z_{u}}-e^{Z_{u_{i}}}\right\Vert _{p} & =\left\Vert \left(Z_{u}-Z_{u_{i}}\right)e^{Z_{u_{i}}+\theta\left(Z_{u}-Z_{u_{i}}\right)}\right\Vert _{p}\\
		& \leq\left\Vert Z_{u}-Z_{u_{i}}\right\Vert _{2p}\left\Vert e^{M}\right\Vert _{2p}
		\\
		& \leq C\left\Vert \int_{0}^{T}\left(K(u,s)-K(u_{i},s)\right)\dd W_{s}\right\Vert _{2p}+\left|\int_{0}^{T}\left(K(u,s)^{2}-K(u_{i},s)^{2}\right)\dd s\right|,
	\end{align*}
	where we have applied Cauchy--Schwarz and
	Minkowski inequalities, and 
	we have used the fact that the supremum $M := \sup_{T\leq u\le T+\Delta}Z_{u}$ of the Gaussian process $Z_u$ admits finite exponential moments, so that we can set $C = \left\Vert e^{M}\right\Vert _{2p}$.
	Now, since
	\[
	\int_{0}^{T}\left(K(u,s)-K(u_{i},s)\right)\dd W_{s}\egl\cN\left(0,\int_{0}^{T}\left(K(u,s)-K(u_{i},s)\right)^{2}\dd s\right),
	\]
	using the moment scaling property $\bE\left[X^{2p}\right] = c_p \,
	\bE\left[X^{2}\right]^p$ of Gaussian random variables, we obtain
	\[
	\left\Vert e^{Z_{u}}-e^{Z_{u_{i}}}\right\Vert _{p}
	\leq
	C \left(\int_{0}^{T}\bigl(K(u,s)-K(u_{i},s)\bigr)^{2}\dd s\right)^{1/2}
	+
	\left|\int_{0}^{T}\left(K(u,s)^{2}-K(u_{i},s)^{2}\right)\dd s\right| \,.
	\]
	The key ingredient is that the integration kernel $K(u,s) = \eta (u-s)^{H - \frac{1}{2}}$ of the rough Bergomi model satisfies the following condition (see \citep[Proof of Corollary 1]{horvath2018volatility}): there exists a constant $c>0$ such that 
	\be \label{e:condition_kernel_1}
	\left(\int_{0}^{T}
	\bigl( K(u_{1},s) - K(u_{2},s) \bigr)^{2}\dd s
	\right)^{1/2}
	\leq c(u_{2}-u_{1})(u_{2}-T)^{H-1},\qquad\text{for all }T\leq u_{1}<u_{2} \,.
	\ee
	On the other hand, by the Cauchy--Schwarz inequality we also have, for all $T\leq u_{1}<u_{2} \le T + \Delta$
	\[
	\begin{aligned}
		\biggl|\int_{0}^{T}\bigl(K(u_1,s)^{2} &- K(u_2,s)^{2}\bigr)\dd s \biggr| =
		\biggl|
		\int_{0}^{T}\bigl( K(u_{1},s) + K(u_{2},s) \bigr)\bigl( K(u_{1},s) - K(u_{2},s) \bigr)\dd s
		\biggl|
		\\
		&\le
		\biggl( \int_{0}^{T} \bigl( K(u_{1},s) + K(u_{2},s) \bigr)^{2} \dd s \biggr)^{1/2}
		\biggl( \int_{0}^{T} \bigl( K(u_{1},s) - K(u_{2},s) \bigr)^{2} \dd s \biggr)^{1/2}
		\\
		&\le
		c_\Delta (u_{2}-u_{1})(u_{2}-T)^{H-1} \,,
	\end{aligned}
	\]
	for some constant $c_\Delta$, possibly depending on $\Delta$ as well.
	Putting things together and using the local boundedness assumption on the function $u \mapsto X_0^u$, we obtain
	\be \label{e:final_estimate_rect}
	\begin{aligned}
		\left\Vert \VIX_{T}^{2}-\Rn\right\Vert _{p}
		&\le
		C \sum_{i=1}^n
		\int_{u_{i-1}}^{u_i}  (u_i - u) (u_i - T)^{H-1}
		\dd u
		\\
		&\le
		C \sum_{i=1}^n  (u_i - u_{i-1})^2 (u_i - T)^{H-1}
		=
		C \frac{\Delta^{1+H}}{n^{1+H}}
		\sum_{i=1}^n \frac 1{i^{1 - H}}
		= \cO\left(\frac{1}{n}\right).
	\end{aligned}
	\ee
	
	For the trapezoidal scheme, the proof follows the same steps.
	Instead of \eqref{e:condition_kernel_1}, we use the following property (see again \citep[Proof of Corollary 1]{horvath2018volatility}): for all $T\leq u_{1}\leq u_{2}<u_{3}$, 
	\[
	\biggl(\int_{0}^{T}
	\Bigl(K(u_{2},s)-\frac{u_{3}-u_{2}}{u_{3}-u_{1}}K(u_{1},s)-\frac{u_{2}-u_{1}}{u_{3}-u_{1}}K(u_{3},s)\Bigr)^{2}\dd s \biggr)^{\frac{1}{2}}
	\leq c(u_{3}-u_{1})^{2}(u_{3}-T)^{H-2} \,.
	\]
	The estimate \eqref{e:final_estimate_rect} is then replaced by
	\[
	\begin{aligned}
		\left\Vert \VIX_{T}^{2}-\Tn\right\Vert _{p}
		&\le
		C \sum_{i=1}^n  (u_i - u_{i-1})^3 (u_i - T)^{H-2}
		=
		C \frac{\Delta^{1+H}}{n^{1+H}}
		\sum_{i=1}^n \frac 1{i^{2 - H}}
		= \cO\left(\frac{1}{n^{1+H}}\right).
	\end{aligned}
	\]
\end{proof}

\begin{proof}[Proof of Theorem \ref{thm:strong:L2:error}]
	The squared $L^{2}$ strong error can be rewritten as 
	\begin{align*}
		\mathbb{E}\left[\left|\VIX_{T}^{2}-\Rn\right|^{2}\right] & =\mathbb{E}\left[\left(\frac{1}{\Delta}\sum_{i=1}^{n}\int_{u_{i-1}}^{u_{i}}\left(e^{X_{T}^{t}}-e^{X_{T}^{u_{i}}}\right)\dd t\right)^{2}\right]\\
		& =\frac{1}{\Delta^{2}}\mathbb{E}\left[\sum_{i,j=1}^{n}\int_{u_{i-1}}^{u_{i}}\left(e^{X_{T}^{t}}-e^{X_{T}^{u_{i}}}\right)\dd t\int_{u_{j-1}}^{u_{j}}\left(e^{X_{T}^{s}}-e^{X_{T}^{u_{j}}}\right)\dd s\right]\\
		& =\frac{h^{2}}{\Delta^{2}}\sum_{i,j=1}^{n}\int_{0}^{1}\int_{0}^{1}\mathbb{E}\left[\left(e^{X_{T}^{u_{i}-uh}}-e^{X_{T}^{u_{i}}}\right)\left(e^{X_{T}^{u_{j}-vh}}-e^{X_{T}^{u_{j}}}\right)\right]\dd u \, \dd v,
	\end{align*}
	where we have introduced the new variables $u=\frac{u_{i}-t}{h},\,v=\frac{u_{j}-s}{h}$.
	Using the expression of the moment-generating function of the normal distribution, for all $a,b\in\left[T,T+\Delta\right]$ we have 
	\begin{align*}
		\mathbb{E}\left[e^{X_{T}^{a}+X_{T}^{b}}\right] & =e^{X_{0}^{a}+X_{0}^{b}} \, \mathbb{E}\left[e^{-\frac{1}{2}\int_{0}^{T}\left(K\left(a,t\right)^{2}+K\left(b,t\right)^{2}\right)\dd t+\int_{0}^{T}\left(K\left(a,t\right)+K\left(b,t\right)\right)\dd W_{t}}\right]
		\\
		& = e^{X_{0}^{a}+X_{0}^{b}} \, e^{\int_{0}^{T}K\left(a,t\right)K\left(b,t\right)\dd t} \, ;
	\end{align*}
	therefore,
	\begin{align*}
		& \mathbb{E}\left[\left(e^{X_{T}^{u_{i}-uh}}-e^{X_{T}^{u_{i}}}\right)\left(e^{X_{T}^{u_{j}-vh}}-e^{X_{T}^{u_{j}}}\right)\right]\\
		& \qquad\qquad=e^{X_{0}^{u_{i}-uh}+X_{0}^{u_{j}-vh}}e^{\int_{0}^{T}K\left(u_{i}-uh,t\right)K\left(u_{j}-vh,t\right)\dd t}-e^{X_{0}^{u_{i}-uh}+X_{0}^{u_{j}}}e^{\int_{0}^{T}K\left(u_{i}-uh,t\right)K\left(u_{j},t\right)\dd t}\\
		& \qquad\qquad\qquad - e^{X_{0}^{u_{i}}+X_{0}^{u_{j}-vh}}e^{\int_{0}^{T}K\left(u_{i},t\right)K\left(u_{j}-vh,t\right)\dd t}+e^{X_{0}^{u_{i}}+X_{0}^{u_{j}}}e^{\int_{0}^{T}K\left(u_{i},t\right)K\left(u_{j},t\right)\dd t}.
	\end{align*}
	Using the assumption $X_{0}^{u}=X_{0}$ for all $u\in[T,T+\Delta]$, it follows that
	\be
	\begin{aligned}
		\delta(n) :=&
		\frac{n^{2}}{e^{2X_{0}}} \mathbb{E}\Bigl[\bigl|\VIX_{T}^{2}-\Rn\bigr|^{2}\Bigr]
		\\
		=&
		\int_{0}^{1}\int_{0}^{1}\sum_{i,j=1}^{n}\Bigl(e^{\int_{0}^{T}K\left(u_{i}-uh,t\right)K\left(u_{j}-vh,t\right)\dd t}-e^{\int_{0}^{T}K\left(u_{i}-uh,t\right)K\left(u_{j},t\right)\dd t}
		\\
		& \qquad\qquad\qquad -e^{\int_{0}^{T}K\left(u_{i},t\right)K\left(u_{j}-vh,t\right)\dd t}+e^{\int_{0}^{T}K\left(u_{i},t\right)K\left(u_{j},t\right)\dd t}\Bigr)\dd u\,\dd v
		\\
		= & 
		\int_{0}^{1}\int_{0}^{1}\sum_{i,j=1}^{n}\Bigl(e^{g\left((i-u)h,\,(j-v)h\right)}-e^{g\left((i-u)h,\,jh\right)}-e^{g\left(ih,\,(j-v)h\right)}+e^{g\left(ih,\,jh\right)}\Bigr)\dd u\,\dd v
		\\
		= & 
		\int_{0}^{1} \dd u \int_{0}^{1} \dd v\sum_{i,j=1}^{n}
		\int_{ih - uh}^{ih} \dd x \int_{jh - vh}^{jh} \dd y \,
		F(x,y) \,,
	\end{aligned}
	\label{eq:proof:sum:exp}
	\ee
	where we have used $u_i = T + i \, h$ and we have set
	\[
	\begin{aligned}
		&F(x,y) := \partial_{x y} e^{g(x,y)}
		\qquad \forall \, x, y \maj 0;
		\\
		&g(x,y) := \eta^{2}\int_{0}^{T}\left(s+x\right)^{H-\frac{1}{2}}\left(s+y\right)^{H-\frac{1}{2}}\dd s = g(y,x),
		\qquad \forall \, x, y \ge 0.
	\end{aligned}
	\]
	By construction, the function $g$ is symmetric, $F$ is positive and integrable over $(0,\Delta)^2$, and 
	\[
	\begin{aligned}
		\int_{0}^{\Delta}\int_{0}^{\Delta}F(x,y)\dd x\dd y
		&= e^{g\left(\Delta,\Delta\right)}-2e^{g\left(\Delta,0\right)}+e^{g\left(0,0\right)}
		\\
		&= e^{\eta^{2}\frac{\left(T+\Delta\right)^{2H}-\Delta^{2H}}{2H}}-2e^{\eta^{2}\int_{0}^{T}t^{H-\frac{1}{2}}\left(t+\Delta\right)^{H-\frac{1}{2}}\dd t}+e^{\eta^{2}\frac{T^{2H}}{2H}} =: \lambda_\Delta.
	\end{aligned}
	\]
	We can simplify the expression of $\delta(n)$ in \eqref{eq:proof:sum:exp} into
	\begin{align}
		\delta(n) &
		=\sum_{i,j=1}^{n}\int_{(i-1)h}^{ih}\dd x\int_{(j-1)h}^{jh}\dd y \, F(x,y)\int_{\frac{ih-x}{h}}^{1}\dd u\int_{\frac{jh-y}{h}}^{1}\dd v \nonumber
		\\
		&=
		\sum_{i,j=1}^{n}\int_{(i-1)h}^{ih}\dd x\int_{(j-1)h}^{jh}\dd y \, F(x,y) \Bigl(1-\frac{ih-x}{h}\Bigr)\Bigl(1-\frac{jh-y}{h}\Bigr). \label{eq:proof:delta:n:representation:ih:jh}
	\end{align}
	We are going to show that there exist $\overline \delta(n)$ and $\underline \delta(n)$ such that
	$\underline \delta(n) \le \delta(n) \le \overline \delta(n)$ for all $n$, and 
	\be \label{e:encadrement}
	\underline \delta(n)
	\xrightarrow[n\to \infty]{} \frac 14  \lambda_\Delta
	\qquad \qquad
	\overline \delta(n)
	\xrightarrow[n\to \infty]{} \frac 14  \lambda_\Delta,
	\ee
	so that $\delta(n) \to \frac 14  \lambda_\Delta$, too.
	Together with \eqref{eq:proof:sum:exp}, the latter limit implies $\mathbb{E}\Bigl[\bigl|\VIX_{T}^{2}-\Rn\bigr|^{2}\Bigr]^{1/2} \sim \frac{e^{X_{0}}}n \sqrt{\delta(n)} \sim \frac{e^{X_{0}}}{2n} \sqrt{\lambda_\Delta} = \frac 1 n \Lambda(X_0, T, \Delta, H)$ as $n\to \infty$, which will conclude the proof.
	
	In order to show \eqref{e:encadrement}, we exploit	some information on the first derivatives of $F$.
	Since $F(x,y)=[\partial_{x}g(x,y)\partial_{y}g(x,y)+ $ $\partial_{xy}g(x,y)] e^{g(x,y)}$, we have
	\[
	\partial_{x}F(x,y)
	= \Bigl[\left(\partial_{x}g\right)^{2}\partial_{y}g + 2 \, \partial_{x}g \, \partial_{xy}g + \partial_{xxy}g
	+ \partial_{xx}g \, \partial_{y}g \Bigr](x,y) \, e^{g(x,y)}
	\]
	for every $x,y>0$,  where
	\begin{align*}
		\partial_{x}g(x,y)&= \eta^2 \Bigl(H-\frac{1}{2}\Bigr) \int_{0}^{T} (s+x)^{H- \frac{3}{2}} (s+y)^{H-\frac{1}{2}} \dd s <0,
		\\
		\partial_{y}g(x,y)&= \eta^2 \Bigl(H-\frac{1}{2}\Bigr)  \int_{0}^{T} (s+x)^{H- \frac{1}{2}} (s+y)^{H-\frac{3}{2}} \dd s<0, 
		\\
		\partial_{xy}g(x,y)&=\eta^{2}\Bigl(H-\frac{1}{2}\Bigr)^{2}\int_{0}^{T}(s+x)^{H-\frac{3}{2}}(s+y)^{H-\frac{3}{2}}\dd s>0,
		\\
		\partial_{xx}g(x,y)&=\eta^{2}\Bigl(H-\frac{1}{2}\Bigr) \Bigl(H-\frac{3}{2}\Bigr)\int_{0}^{T}(s+x)^{H-\frac{5}{2}}(s+y)^{H-\frac{1}{2}}\dd s>0,
		\\
		\partial_{xxy}g(x,y)&=\eta^{2}\Bigl(H-\frac{1}{2}\Bigr)^{2}\Bigl(H-\frac{3}{2}\Bigr)\int_{0}^{T}(s+x)^{H-\frac{5}{2}}(s+y)^{H-\frac{3}{2}}\dd s<0 \,.
	\end{align*}
	Since $H\in(0,\frac{1}{2})$, we infer that $\partial_{x}F(x,y)<0$. Analogous computations lead to $\partial_{y}F(x,y)<0$, and
	it follows that
	\begin{equation}\label{eq:proof:ineq:partial:F}
		F(x_1,y_1)>  F(x_2,y_2)
	\end{equation} 
	for every $0<x_1 < x_2$ and $0<y_1 < y_2$.
	
	\emph{Step 1 (lower bound $\underline \delta(n)$)}.
	Setting $z_{1}=\frac{ih-x}{h}$, $z_{2}=\frac{jh-y}{h}$ 
	in \eqref{eq:proof:delta:n:representation:ih:jh},
	we get
	\[
	\begin{aligned}
		\delta(n)
		&= h^{2} \sum_{i,j=1}^{n} \int_{0}^{1}\int_{0}^{1}(1-z_{1})(1-z_{2})F\left((i-z_{1})h,(j-z_{2})h\right)\dd z_{1}\dd z_{2}
		\\
		&\ge h^{2} \sum_{i,j=1}^{n} F\left(i h,jh\right) \int_{0}^{1}\int_{0}^{1}(1-z_{1})(1-z_{2})\dd z_{1}\dd z_{2}
		\\
		&= \frac{h^{2}} 4 \sum_{i,j=1}^{n} F\left(i h,jh\right) =: \underline \delta(n),
	\end{aligned}
	\]
	where we have used $F\left((i-z_{1})h,(j-z_{2})h\right) \ge F\left(i h,jh\right)$
	for every $(z_1,z_2) \in [0,1]^2$
	in the second step and $\int_{0}^{1}\int_{0}^{1}(1-z_{1})(1-z_{2})\dd z_{1}\dd z_{2}=\frac{1}{4}$ in the last step.
	Since $F$ is continuous and integrable over $(0,\Delta)^2$, we have $\underline \delta(n) \to \frac 14 \int_{0}^{\Delta}\int_{0}^{\Delta}F(x,y)\dd x\dd y = \frac 14 \lambda_\Delta$, as claimed.
	
	\emph{Step 1 (upper bound $\overline \delta(n)$)}.
	Separating the contribution for $i=1$, the one for $j=1$, 
	and the remaining one for $(i,j) \in \{2, \dots, n\}^2$ inside \eqref{eq:proof:delta:n:representation:ih:jh}, we get
	\[
	\begin{aligned}
		\delta(n)
		&= 
		\int_{0}^{h} \dd x \, \frac x h \, \sum_{j=1}^{n} \int_{(j-1)h}^{jh}\dd y \, F(x,y)  \Bigl(1-\frac{jh-y}{h}\Bigr)
		\\
		&\quad +
		\int_{0}^{h} \dd y \, \frac y h \, \sum_{i=2}^{n}\int_{(i-1)h}^{ih} \dd x \, F(x,y) \Bigl(1-\frac{ih-x}{h}\Bigr) 
		\\
		&\quad + 
		\sum_{i,j=2}^{n}\int_{(i-1)h}^{ih}\dd x\int_{(j-1)h}^{jh}\dd y \, F(x,y) \Bigl(1-\frac{ih-x}{h}\Bigr)\Bigl(1-\frac{jh-y}{h}\Bigr)
		\\
		&\le \int_0^h \dd x \int_0^{\Delta} \dd y \, F(x,y)
		+ \int_0^h \dd y \int_0^{\Delta} \dd x \, F(x,y)
		\\
		&\quad +
		\frac{h^{2}}4  \sum_{i,j=2}^{n} F\left((i-1)h,(j-1)h\right) =: \overline \delta(n),
	\end{aligned}
	\]
	where we have used again the change of variables $z_{1}=\frac{ih-x}{h}$, $z_{2}=\frac{jh-y}{h}$, the inequality \\ $F\left((i-z_{1})h,(j-z_{2})h\right)$ $\le F\left((i-1) h,(j-1) h\right)$ and the identity  $\int_{0}^{1}\int_{0}^{1}(1-z_{1})(1-z_{2})\dd z_{1}\dd z_{2}=\frac{1}{4}$ in the last step.
	Since, on the one hand,
	\[
	\int_0^h \dd x \int_0^{\Delta} \dd y \, F(x,y)
	+ \int_0^h \dd y \int_0^{\Delta} \dd x \, F(x,y) \xrightarrow[h\to 0]{} 0,
	\]
	and, on the other hand,
	\[
	\begin{aligned}
		\frac{h^2}4 \sum_{i,j=2}^{n} F\left((i-1)h,(j-1)h\right)
		&= \frac{h^2}4 \sum_{i,j=1}^{n-1} F\left(i h,j h\right)
		\\
		&\le \frac{h^2}4 \sum_{i,j=1}^{n} F\left(i h,j h\right)
		\xrightarrow[h\to 0]{} \frac 14 \int_{0}^{\Delta}\int_{0}^{\Delta}F(x,y)\dd x\dd y = \frac 14 \lambda_\Delta,
	\end{aligned}
	\]
	we obtain $\overline \delta(n) \to \frac 14 \lambda_\Delta$ as $n \to \infty$, as claimed.
\end{proof}

\section{Acknowledgments and Declaration of Interest}

We thank our colleagues Emmanuel Gobet and Gilles Pag\`{e}s for stimulating discussions on multilevel Monte Carlo methods.
This research benefited from the support of the \emph{Chaire Risques Financiers}, led by Ecole Polytechnique and \emph{La Fondation du Risque} and sponsored by Soci\'{e}t\'{e} G\'{e}n\'{e}rale, and of the \emph{Chaire Stress Test, RISK Management and Financial Steering}, led by Ecole Polytechnique and \emph{Fondation de l'Ecole Polytechnique} and sponsored by BNP Paribas.

\bibliographystyle{plainnat}
\bibliography{biblio}

\newpage

\listoffigures
\listoftables	

\medskip


\end{document}